\documentclass[journal]{IEEEtran}

\usepackage{amsmath,amsfonts,amssymb}
\usepackage{bbm}
\usepackage{caption}
\usepackage{subcaption}
\usepackage{comment}
\usepackage{graphicx}
\usepackage{algorithm}
\usepackage{algorithmic}
\usepackage{amsthm}
\usepackage{orcidlink}
\usepackage{hyperref}
\hypersetup{
    colorlinks=true,
    linkcolor=teal,
    filecolor=magenta,      
    urlcolor=cyan,
    citecolor=teal
    }
\usepackage{cite}
\usepackage{tikz}
\usepackage{url}
\usepackage{adjustbox}
\usepackage{balance}
\usetikzlibrary{arrows.meta, positioning, calc}
\newtheorem{theorem}{Theorem}

\newtheorem{remark}{Remark}

\begin{document}

\title{ATS-ToDMA: Adaptive Token Selection and Token-Domain Multiple Access for Cross-Modal Semantic Communications}

\author{Sachin~Kadam$^{\orcidlink{0000-0001-7085-3365}}$~and~Dong~In~Kim$^{\orcidlink{0000-0001-7711-8072}}$,~\IEEEmembership{Life Fellow,~IEEE}
\thanks{S.~Kadam is with the Department of Electronics and Communication Engineering, Motilal Nehru National Institute of  Technology, Prayagraj, UP, 211004, India, (e-mail: sachink@mnnit.ac.in) and D.~I.~Kim is with the Department of Electrical and Computer Engineering, Sungkyunkwan University (SKKU), Suwon 16419, Republic of Korea (e-mail: dongin@skku.edu). 
}
}
\maketitle

\begin{abstract}
Adaptive token processing has emerged as a promising approach for improving the efficiency of semantic communication systems. However, existing semantic communication frameworks largely overlook token-level multiple access and the impact of semantic interference among simultaneously transmitted semantic tokens. In this paper, we propose Adaptive Token Selection and Token-Domain Multiple Access (ATS-ToDMA), a novel cross-modal semantic communication framework that jointly performs semantic token selection, interference-aware scheduling, and semantic-aware power allocation. The proposed framework introduces a Semantic Signal-to-Interference-plus-Noise Ratio (SSINR) metric that captures the combined effects of channel impairments and semantic interference arising from token similarity. A transformer-based scheduler is developed to allocate selected semantic tokens across token-domain transmission slots while mitigating both intra-modal and cross-modal semantic interference. To characterize the behavior of the proposed system, analytical bounds on semantic interference and feasible token occupancy are derived, together with a closed-form approximation for semantic-aware power allocation. Simulation results demonstrate significant gains in semantic throughput and semantic decoding accuracy while reducing aggregate semantic interference and transmit power compared with OMA, Semantic NOMA, Random-TS, and Greedy ATS benchmarks. 
\end{abstract}

\begin{IEEEkeywords}
Semantic communication, LSTM, adaptive token selection, multiple access, interference modeling.
\end{IEEEkeywords}

\section{Introduction}
The rapid proliferation of data-intensive and intelligence-driven applications, such as immersive communications, autonomous systems, and edge intelligence, is fundamentally reshaping the design objectives of wireless networks. Conventional communication systems, grounded in Shannon’s information theory, are primarily designed to ensure reliable bit-level transmission by minimizing distortion between transmitted and reconstructed signals \cite{shannon1948}. However, in many emerging applications, the ultimate goal is not perfect data reconstruction but successful task execution. This mismatch leads to inefficient utilization of communication and computation resources, as large amounts of task-irrelevant information are unnecessarily transmitted.

To address this limitation, semantic communication (SemCom) has recently emerged as a paradigm shift, aiming to transmit only task-relevant information required for inference or decision-making at the receiver \cite{semantic_comm_overview1, semantic_comm_overview2, xin2024semantic, chaccour2025less}. By optimizing task-oriented performance metrics rather than symbol-level accuracy, SemCom enables significant reductions in communication overhead while preserving or even enhancing end-task performance. In particular, deep learning-based approaches, including joint source-channel coding (JSCC), have demonstrated promising results for semantic transmission of multimedia data across noisy wireless channels \cite{task_oriented_comm1, task_oriented_comm2, peng2025robust}.

Recent advances in foundation models and deep representation learning, including transformers for language and vision, have further enabled compact semantic representations across heterogeneous modalities such as text, images, and speech \cite{devlin2019bert, brown2020language, dosovitskiy2020image, radford2021learning}. These models map raw inputs into structured semantic token representations,
where each token encodes localized semantic information. Although such representations significantly reduce redundancy, transmitting all tokens remains inefficient, particularly under multi-user and bandwidth-limited wireless conditions.

To further improve efficiency, adaptive token selection (ATS) techniques have been introduced to retain only the most informative subset of tokens \cite{rao2021dynamicvit, ryoo2021tokenlearner, xu2022evo}. Specifically, ATS selects a subset $\mathcal{K} \subseteq \{1,\dots,N\}$ such that $|\mathcal{K}| \ll N$, while preserving task-relevant semantic content. However, existing ATS methods are primarily designed for computational efficiency and do not explicitly account for wireless resource constraints or multi-user interactions.

More critically, multi-user SemCom introduces a fundamentally new form of interference, termed semantic interference. Unlike conventional electromagnetic interference caused by signal superposition, semantic interference arises from similarity among transmitted semantic tokens. In particular, tokens with high cosine similarity~\cite{lahitani2016cosine}, may introduce ambiguity during decoding, leading to degraded task performance even under favorable channel conditions. This effect becomes more severe in multi-user cross-modal systems, where heterogeneous tasks and modalities coexist. Existing multiple access and scheduling strategies, including orthogonal and non-orthogonal schemes, are not designed to exploit semantic structure, resulting in suboptimal resource utilization~\cite{noma_survey}.

Despite recent advances in semantic communication, three fundamental challenges remain unresolved: (i) identifying and transmitting only semantically valuable tokens in multi-user environments, (ii) mitigating semantic interference arising from similar semantic representations, and (iii) jointly optimizing semantic scheduling and wireless resource allocation across heterogeneous modalities. 

Motivated by these challenges, this paper proposes a unified {adaptive token selection and token-domain multiple access (ATS-ToDMA)} framework for multi-user cross-modal SemCom system. The key idea is to treat semantic tokens as the fundamental transmission units and jointly optimize token selection and scheduling over shared wireless resources. To capture complex semantic dependencies, we develop a transformer-based scheduler based on self-attention mechanisms~\cite{vaswani2017attention}.

In addition, we introduce a semantic interference model that captures both intra-modal and cross-modal interactions among tokens, and we characterize its impact on system performance using a semantic signal-to-interference-plus-noise ratio (SSINR) metric. Based on this model, we derive analytical interference bounds and formulate a semantic throughput maximization problem under reliability and resource constraints.

The main contributions of this paper are summarized as follows:

\begin{itemize}

\item \textbf{Unified ATS-ToDMA Framework:}
We propose a novel Adaptive Token Selection and Token-Domain Multiple Access (ATS-ToDMA) framework for multi-user cross-modal SemCom system. Unlike existing SemCom systems that operate at the signal or feature level, the proposed framework explicitly treats semantic tokens as the fundamental transmission units, enabling fine-grained control over semantic information flow and wireless resource allocation.

\item \textbf{Transformer-Based Semantic Scheduling:}
We design a transformer-based scheduler that captures higher-order semantic dependencies across users and modalities via self-attention mechanisms. This enables joint optimization of token importance and scheduling decisions, significantly improving resource utilization compared to greedy or heuristic allocation strategies.

\item \textbf{Semantic Interference Modeling and SSINR Analysis:}
We introduce a rigorous semantic interference model based on pairwise semantic similarity, capturing both intra-modal and cross-modal interactions among tokens. Based on this model, we define a semantic signal-to-interference-plus-noise ratio (SSINR) metric that quantifies the impact of semantic overlap on task performance.

\item \textbf{Theoretical Performance Characterization:}
We derive closed-form interference bounds and characterize token allocation constraints under SSINR-based reliability requirements. These results provide fundamental insights into the interplay between semantic similarity, interference, and resource allocation in multi-user SemCom systems.

\item \textbf{Semantic Throughput Optimization:}
We formulate a joint optimization problem that maximizes semantic throughput under power, scheduling, and reliability constraints. The problem jointly optimizes adaptive token selection and wireless resource allocation, and we propose an alternating optimization framework to efficiently solve the resulting non-convex problem.

\item \textbf{Complexity and Efficiency Analysis:}
We analyze the computational complexity of LSTM- and transformer-based semantic encoders and demonstrate that the proposed ATS mechanism significantly reduces redundant token processing, leading to lower FLOPs, latency, and communication overhead while preserving semantic fidelity.

\item \textbf{Approximate Closed-Form Power Allocation:} We derive an approximate closed-form semantic-aware power allocation strategy, where the transmit power assigned to each token admits an analytically tractable expression that jointly captures channel noise, semantic interference, and token reliability. The proposed formulation provides both low-complexity implementation and interpretable insights into semantic power adaptation under varying channel conditions.
\end{itemize}
To the best of our knowledge, ATS-ToDMA is among the first semantic communication architectures to explicitly model semantic interference as a schedulable and controllable resource-management quantity. 

Overall, the proposed framework provides a unified theoretical and algorithmic foundation for scalable, interference-aware, and resource-efficient SemCom in next-generation wireless networks. The limitations of existing state-of-the-art approaches compared to the proposed ATS-ToDMA framework are provided in Table~\ref{tab:limitations}.

\textbf{Paper Organization:}
The remainder of this paper is organized as follows. Section II reviews related work. Section III presents the system model and problem formulation. Section IV develops the proposed ATS-ToDMA framework. Section V provides theoretical analysis and complexity results. Section VI presents simulation results. Finally, Section VII concludes the paper and outlines future research directions.
\begin{table}
\centering
\caption{Limitations of existing approaches compared to the proposed ATS-ToDMA framework.}
\label{tab:limitations}
\renewcommand{\arraystretch}{1.2}
\setlength{\tabcolsep}{4pt}
\small

\begin{tabular}{|p{3.2cm}|p{4.2cm}|}
\hline
\textbf{Method} & \textbf{Key Limitation} \\
\hline

JSCC-based SemCom~\cite{task_oriented_comm1, task_oriented_comm2} &
No token-level representation; lacks multi-user and scheduling capability. \\
\hline

Knowledge-driven semantic systems~\cite{IoTkadam2025, kadam2023knowledge, kadam2024knowledge, kadam2024semantic} &
Depend heavily on prior knowledge; not scalable to heterogeneous cross-modal data. \\
\hline

Transformer / LSTM models~\cite{dosovitskiy2021image,hochreiter1997long} &
Designed for representation learning, not communication or resource allocation. \\
\hline

Token pruning / ATS methods~\cite{devoto2024adaptive, kim2022efficient} &
Single-user focus; ignore wireless constraints and inter-user interference. \\
\hline

Semantic token selection works~\cite{yang2022semantic, qin2022semantic, feng2023semantic, wang2022semantic, zhang2023adaptive} &
Do not model semantic interference or multi-user scheduling. \\
\hline

OMA / NOMA systems~\cite{noma_survey} &
Operate at signal level; ignore semantic structure and token relationships. \\
\hline

Semantic NOMA systems~\cite{li2023non} &
Though operate at semantic level for multi-user communications; but ignore semantic interference and token relationships. \\
\hline

\textbf{Proposed ATS-ToDMA} &
Jointly models token selection, semantic interference, and multi-user scheduling in a unified framework. \\
\hline

\end{tabular}
\end{table}
\section{Related Work}
Semantic communication has recently emerged as a promising paradigm for next-generation wireless systems, aiming to transmit task-relevant information instead of raw data. Early works in task-oriented communication demonstrated that optimizing end-task performance, rather than bit-level accuracy, can significantly improve communication efficiency under bandwidth and channel constraints. In particular, deep learning-based joint source-channel coding (JSCC) approaches have shown strong performance in multimedia transmission by directly mapping source data into channel inputs, enabling robust semantic delivery over wireless links \cite{task_oriented_comm1, task_oriented_comm2}.

More recently, fundamental theoretical frameworks for SemCom have been developed, including semantic entropy, semantic rate-distortion theory, and goal-oriented communication principles \cite{semantic_comm_overview1, semantic_comm_overview2, xin2024semantic, chaccour2025less}. These works formalize the shift from symbol-level fidelity to semantic-level task fidelity, motivating new design principles and performance metrics tailored to downstream task success rather than reconstruction accuracy.

Building upon these foundations, recent studies have extended SemCom to multi-user and resource-constrained wireless systems. For instance, joint communication and computation optimization has been investigated in probabilistic SemCom frameworks, highlighting the importance of semantic-aware resource allocation \cite{zhao2024joint}. Additionally, recent works have studied robustness against channel impairments and semantic distortion, emphasizing the need for interference-aware semantic system design \cite{peng2025robust}. However, these approaches largely rely on conventional signal-level abstractions and do not explicitly model semantic-level interactions in multi-user environments.

SemCom has gained significant attention in recent years. Prior works such as \cite{IoTkadam2025, kadam2023knowledge, kadam2024knowledge, kadam2024semantic} have explored knowledge-driven SemCom systems, where contextual and domain knowledge is leveraged to enhance transmission efficiency and task performance.

In parallel, deep sequence modeling architectures have demonstrated strong representation capabilities. Transformers have shown excellent performance in capturing long-range dependencies through self-attention mechanisms \cite{dosovitskiy2021image}, while recurrent architectures such as LSTM remain widely used for efficient temporal modeling in sequential data processing \cite{hochreiter1997long}.

To further improve efficiency, adaptive token selection (ATS) and token pruning techniques have been widely studied to reduce computational complexity while preserving task-relevant information \cite{devoto2024adaptive, kim2022efficient}. These methods selectively retain informative tokens based on attention or importance scores.

In semantic and edge intelligence applications, several studies \cite{yang2022semantic, qin2022semantic, feng2023semantic, wang2022semantic, zhang2023adaptive} highlight the importance of token-level selection mechanisms for improving communication efficiency, reducing redundancy, and enabling scalable semantic transmission over wireless networks.

Multiple access techniques have also been explored in SemCom systems. Classical orthogonal multiple access (OMA) and non-orthogonal multiple access (NOMA) schemes have been adapted to improve spectral efficiency in semantic transmission scenarios \cite{noma_survey}. Nevertheless, these methods fail to exploit semantic relationships among transmitted tokens, which can lead to performance degradation in multi-user cross-modal settings. In particular, they do not account for semantic interference arising from similarity in semantic representations.

More recently, transformer-based architectures have been investigated for scheduling and resource allocation due to their ability to capture long-range dependencies and complex interactions. However, their application in SemCom remains limited, particularly in multi-user cross-modal scenarios.

In contrast to existing works, this paper proposes a unified ATS-ToDMA framework for multi-user cross-modal SemCom system. The proposed framework jointly integrates adaptive token selection, semantic-aware scheduling, and wireless resource allocation. By introducing a transformer-based scheduler and explicitly modeling semantic interference, the proposed approach enables efficient token-level multiple access while accounting for both intra-modal and cross-modal semantic dependencies, thereby providing a comprehensive solution for next-generation SemCom systems.

The comparison in Table~\ref{tab:comparison_gain} highlights several important observations. First, existing SemCom frameworks, including JSCC-based and knowledge-driven approaches, primarily focus on improving end-task performance but do not incorporate token-level transmission or multi-user resource coordination. Second, while adaptive token selection and pruning methods improve computational efficiency, they are largely designed for isolated single-user settings and do not consider wireless resource constraints or inter-user interference. Third, classical multiple access schemes such as OMA and NOMA extend connectivity to multi-user scenarios but operate at the signal level and fail to exploit semantic structure, thereby overlooking semantic relationships among transmitted representations.

More importantly, none of the existing approaches jointly address token selection, multi-user scheduling, and semantic interference modeling within a unified framework. In contrast, the proposed ATS-ToDMA framework integrates all three components by treating semantic tokens as the fundamental transmission units and explicitly modeling their interdependencies through a transformer-based scheduler. This enables joint optimization of token importance, resource allocation, and semantic interference mitigation, which is essential for achieving reliable and efficient cross-modal SemCom in multi-user environments.

Although substantial progress has been made in semantic communication, token pruning, and multi-user access techniques, jointly optimizing semantic token selection, semantic interference management, and wireless resource allocation remains largely unexplored. The difficulty arises because token importance, semantic similarity, and channel conditions are strongly coupled, making conventional resource-allocation techniques unsuitable for semantic communication systems.
\begin{table*}
\centering
\caption{Comparison of existing methods with the proposed ATS-ToDMA framework including capability and performance characteristics.}
\label{tab:comparison_gain}
\renewcommand{\arraystretch}{1.15}
\setlength{\tabcolsep}{4pt}
\small

\begin{tabular}{|l|c|c|c|c|c|c|}
\hline
\textbf{Method} &
\textbf{SemCom} &
\textbf{ATS} &
\textbf{Multi-U} &
\textbf{Sem. Int.} &
\textbf{Token Sch.} &
\textbf{Gain} \\
\hline

JSCC-based \cite{task_oriented_comm1, task_oriented_comm2}
& \checkmark & $\times$ & $\times$ & $\times$ & $\times$ & Low \\
\hline

Knowledge-driven~\cite{IoTkadam2025, kadam2023knowledge, kadam2024knowledge, kadam2024semantic}
& \checkmark & $\times$ & $\times$ & $\times$ & $\times$ & Medium \\
\hline

Transformer Models \cite{dosovitskiy2021image}
& $\times$ & $\times$ & $\times$ & $\times$ & $\times$ & N/A \\
\hline

LSTM Models \cite{hochreiter1997long}
& $\times$ & $\times$ & $\times$ & $\times$ & $\times$ & N/A \\
\hline

Token Pruning \cite{devoto2024adaptive, kim2022efficient}
& $\times$ & \checkmark & $\times$ & $\times$ & $\times$ & Low-Med \\
\hline

Semantic Token Works \cite{yang2022semantic, qin2022semantic, feng2023semantic, wang2022semantic, zhang2023adaptive}
& \checkmark & \checkmark & $\times$ & $\times$ & $\times$ & Medium \\
\hline

OMA/NOMA~\cite{noma_survey}
& \checkmark & $\times$ & \checkmark & $\times$ & $\times$ & Medium \\
\hline

Semantic NOMA ~\cite{li2023non}
& \checkmark & $\times$ & \checkmark & $\times$ & $\times$ & Medium \\
\hline

\textbf{Proposed ATS-ToDMA}
& \checkmark & \checkmark & \checkmark & \checkmark & \checkmark & \textbf{High} \\
\hline

\end{tabular}
\end{table*}
\section{System Model and Problem Formulation}\label{Sec:SysModel_ProbForm}
\begin{figure}
\centering
\includegraphics[width=0.48\textwidth]{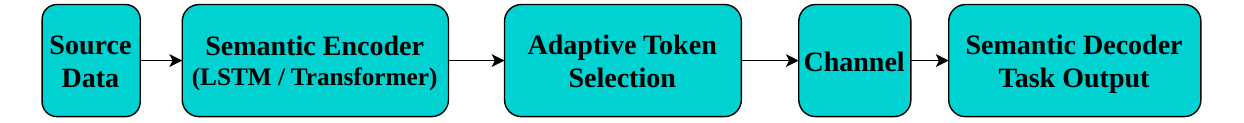}
\caption{SemCom architecture with adaptive token selection.}
\label{fig:system}
\end{figure}
In this section, first, we provide a brief overview of the proposed system model and later in Section~\ref{Sec:Prob_Formulation}, we present our problem formulation. 

\subsection{System Model} \label{Sec:SysModel}
The overall operation of the proposed SemCom architecture with adaptive token selection system, shown in Fig.~\ref{fig:system}, is described as follows. First, the source generates the input data denoted by $x$, which represents the raw information to be transmitted. This input is then processed by a semantic encoder that extracts meaningful and task-relevant features from the data. The encoding process is expressed as $z = f(x)$,
where $f(\cdot)$ denotes the encoding function and $z$ represents the latent semantic representation of the input.

Following feature extraction, an adaptive token selection (ATS) mechanism is employed to improve transmission efficiency (see Algorithm~\ref{Algo:TokenSelection}). Specifically, the ATS module selects a subset of the most informative tokens from the full set of encoded tokens. Let $N$ denote the total number of tokens and $K \subset N$ represent the selected subset. This selection process ensures that only the most significant tokens, with respect to the underlying task, are retained for transmission, thereby reducing redundancy and communication overhead.

\begin{algorithm}
\caption{Adaptive Token Selection (ATS)}
\begin{algorithmic}[1]
\STATE Input tokens $T=\{t_1,...,t_N\}$
\STATE Compute importance scores $s_i = q(t_i)$
\FOR{each token $t_i$}
\IF{$s_i > \tau_{\text{ATS}}$}
\STATE Retain $t_i$
\ENDIF
\ENDFOR
\STATE Output selected tokens $K$
\end{algorithmic}
\label{Algo:TokenSelection}
\end{algorithm}

The selected tokens are then transmitted over the communication channel, which may introduce noise or distortion depending on channel conditions. Despite these impairments, the system is designed to preserve the most critical semantic information through the selective transmission process.
At the receiver side, a decoder processes the received tokens to reconstruct the task-relevant information. The decoder leverages the semantic structure embedded in the transmitted tokens to recover an accurate representation of the original input or its relevant features. This end-to-end process enables efficient and robust communication by focusing on semantic fidelity rather than exact signal reconstruction.

Next, this ATS mechanism is used in a multi-user cross-modal SemCom system as shown in Fig.~\ref{fig:ATS_ToDMA_Framework}. In the considered SemCom framework, each input is mapped to a sequence of semantic tokens through a modality-specific semantic encoder (e.g., LLM~\cite{yao2024survey}, ViT~\cite{yuan2021tokens}, or audio). One of the main components in every token is an embedding vector $\mathbf{e}_i \in \mathbb{R}^d$, where $d$ denotes the embedding dimension. Without loss of generality, embeddings are normalized such that $\|\mathbf{e}_i\| = 1$, which simplifies similarity analysis and stabilizes training.
\begin{figure}
\centering
\includegraphics[width=0.48\textwidth]{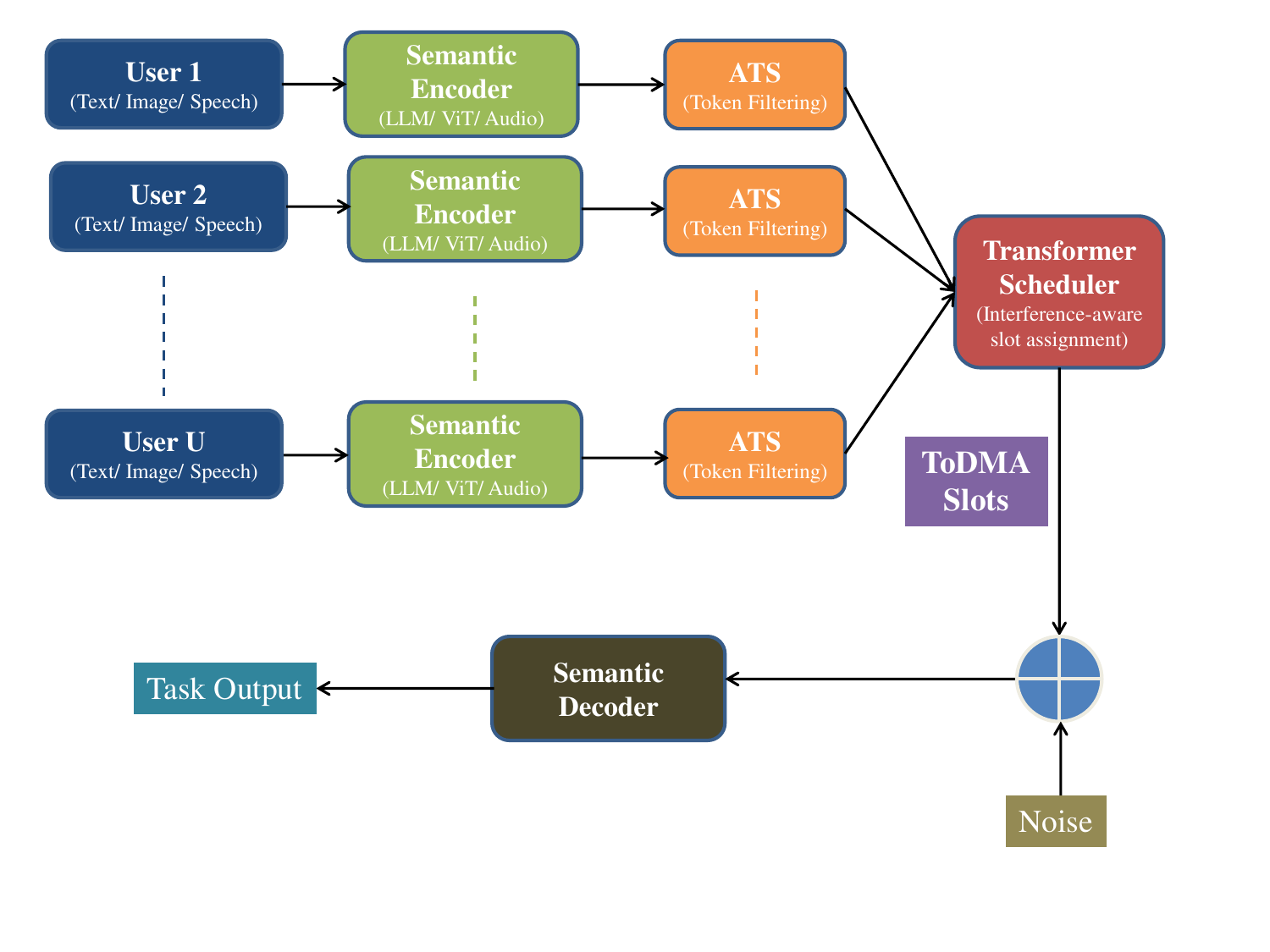}
\caption{Architecture of the proposed ATS-ToDMA framework for multi-user cross-modal SemComs.}
\label{fig:ATS_ToDMA_Framework}
\end{figure}

\subsubsection{Token Generation and Adaptive Selection}
Let us consider $U$ users and each user $u$ generates modality-specific semantic tokens through a semantic encoder. For modality $m \in \{\text{text, image, speech}\}$, the token set is given by:
\begin{equation}
T_u^{(m)} = \{ t_1^{(u,m)}, \dots, t_{K_u^{(m)}}^{(u,m)} \}.
\end{equation}

Since not all tokens contribute equally to semantic understanding, transmitting all tokens leads to inefficiency and unnecessary interference. To address this, we adopt an Adaptive Token Selection (ATS) mechanism (see Fig.~\ref{fig:system}) that filters tokens based on their semantic importance:
\begin{equation}
t_i^{(u,m)} \in T_u^{(m)} \quad \text{if} \quad s_i^{(u,m)} > \tau_{\text{ATS}},
\end{equation}
where $s_i^{(u,m)}$ denotes the importance score and $\tau_{\text{ATS}}$ is a threshold.
This step reduces redundancy and ensures that only semantically meaningful tokens are forwarded for transmission.

\subsubsection{Transformer-Based Semantic Scheduling}
Conventional greedy scheduling schemes allocate resources based solely on local importance, ignoring global semantic dependencies across users and modalities. This often leads to suboptimal decisions under interference constraints.
To overcome this limitation, we propose a transformer-based scheduler that jointly models token importance and semantic relationships.

Each token is represented as:
$\mathbf{x}_i = [\mathbf{e}_i, s_i]$,
where $\mathbf{e}_i$ is the embedding and $s_i$ is the importance score. The transformer encoder produces contextualized representations:
\begin{equation}
\mathbf{Z} = \text{Transformer}(\mathbf{X}).
\end{equation}

The self-attention mechanism enables each token to capture dependencies with all other tokens:
\begin{equation}
\text{Attention}(\mathbf{Q},\mathbf{K},\mathbf{V}) =
\text{softmax}\left(\frac{\mathbf{Q}\mathbf{K}^\top}{\sqrt{d}}\right)\mathbf{V}.
\end{equation}
Based on these representations, the scheduler predicts the probability of assigning token $i$ to slot $k$:
\begin{equation}
p_{ik} = \Pr(\text{token } i \rightarrow \text{slot } k).
\end{equation}
This formulation enables globally optimized scheduling decisions that account for semantic interactions.
At deployment, strict constraint satisfaction is required. Tokens are first assigned via:
\begin{equation}
\text{slot}(i) = \arg\max_k p_{ik},
\end{equation}
followed by pruning to enforce slot capacity constraints and interference thresholds (see Section~\ref{Sec:ConstraintSchedule}).

\subsubsection{Joint Multi-Modal Semantic Decoder}
At the receiver, tokens from different modalities may arrive within the same slot. To exploit complementary information, we propose a joint multi-modal decoder.

\paragraph{Intra-Modal Refinement}
Each modality $m \in \{\text{text, image, speech}\}$ is first refined independently:
\begin{equation}
\tilde{\mathbf{Z}}_m = \text{Attention}(\mathbf{Z}_m, \mathbf{Z}_m, \mathbf{Z}_m).
\end{equation}

\paragraph{Cross-Modal Interaction}

Cross-attention enables information exchange across modalities, improving semantic reconstruction under ambiguity.

\paragraph{Fusion}

The final representation aggregates all modalities:
\begin{equation}
\mathbf{Z}_{\text{joint}} =
\text{Concat}(\mathbf{Z}_T^{\text{joint}}, \mathbf{Z}_I^{\text{joint}}, \mathbf{Z}_S^{\text{joint}}).
\end{equation}

\subsubsection{Channel-aware Semantic Decoding}
To enhance robustness under dynamic wireless channels, we incorporate channel state information (CSI) and signal-to-noise ratio (SNR) into the semantic decoding process. In contrast to conventional semantic decoders that operate solely in the representation space, the proposed framework introduces a channel-aware modulation mechanism that adaptively adjusts token-level semantic importance according to instantaneous channel conditions. Specifically, CSI and SNR are embedded into a learnable gating function that modulates token representations prior to cross-modal fusion. This enables the decoder to suppress semantically unreliable tokens under poor channel conditions while preserving informative semantic features when channel quality is high, thereby improving overall semantic reliability.

Let:
\begin{itemize}
    \item $h_i$: CSI feature for token stream $i$
    \item $\gamma_i$: instantaneous SNR estimate
    \item $Z_i$: token representation
\end{itemize}

We construct a channel-aware embedding gate:
\begin{equation}
\mathbf{g}_i = \sigma\left(\mathbf{W}_h \mathbf{h}_i + \mathbf{W}_\gamma \gamma_i \right),
\end{equation}
where:
\begin{itemize}
    \item $\mathbf{W}_h$ and $\mathbf{W}_\gamma$ are learnable parameter matrices that project CSI features and instantaneous SNR estimates into the semantic gating space.
    \item $\sigma(\cdot)$ is the sigmoid function
    \item $\mathbf{g}_i \in (0,1)^d$ acts as a semantic reliability mask
\end{itemize}

\begin{equation}
\tilde{\mathbf{Z}}_i = \mathbf{g}_i \odot \mathbf{Z}_i,
\end{equation}
This means bad channel has suppressed semantic features and good channel has preserved semantic content. Next, we integrate CSI/SNR into attention as follows:
\begin{equation}
\text{Attention} =
\text{softmax}\left(
\frac{\mathbf{Q}\mathbf{K}^\top}{\sqrt{d}} + \beta \mathbf{A}_{\text{ch}}
\right)\mathbf{V}
\end{equation}
where:
\begin{equation}
A_{\text{ch}} = f(\text{CSI}, \text{SNR})
\label{eq:Attention_Ch}
\end{equation}

\begin{table}[t]
\caption{Major Notations Used in the ATS-ToDMA Framework}
\label{tab:notations}
\centering
\begin{tabular}{|c|p{0.38\textwidth}|}
\hline
\textbf{Symbol} & \textbf{Description} \\
\hline

$u$ & User index. \\
\hline

$t_i$ & Semantic token corresponding to the $i$-th semantic unit. \\
\hline

$e_i$ & Embedding vector associated with token $t_i$. \\
\hline

$d$ & Embedding dimension. \\
\hline

$s_i$ & Semantic importance score of token $t_i$. \\
\hline

$x_i$ & Binary token selection variable. \\
\hline


$P_i$ & Transmit power allocated to token $t_i$. \\
\hline

$\gamma_i$ & Instantaneous SNR associated with token $t_i$. \\
\hline

$g_i$ & Channel-aware semantic protection factor for token $t_i$. \\
\hline

$\xi_{ij}$ & Semantic similarity between tokens $t_i$ and $t_j$. \\
\hline

$\delta$ & Maximum admissible semantic similarity. \\
\hline

$\gamma$ & Semantic similarity threshold used in ATS. \\
\hline

$\alpha_{ij}$ & Semantic interference coefficient between tokens $i$ and $j$. \\
\hline

$\alpha_{\mathrm{intra}}$ & Average intra-modal semantic interference coefficient. \\
\hline

$\alpha_{\mathrm{cross}}$ & Average cross-modal semantic interference coefficient. \\
\hline

$I_{ij}$ & Pairwise semantic interference between tokens $i$ and $j$. \\
\hline

$I_{\mathrm{total}}$ & Aggregate semantic interference within a ToDMA slot. \\
\hline

$\hat I_k$ & Expected semantic interference in slot $k$ during scheduler optimization. \\
\hline

$N_0$ & Thermal noise power. \\
\hline

$\mathrm{SSINR}_i$ & Semantic Signal-to-Interference-plus-Noise Ratio of token $i$. \\
\hline

$\Gamma$ & Minimum target SSINR threshold. \\
\hline

$M$ & Number of simultaneously scheduled tokens within a ToDMA slot. \\
\hline

$M_{\max}$ & Maximum feasible token occupancy satisfying the SSINR constraint. \\
\hline

$R_s$ & Semantic throughput. \\
\hline

\end{tabular}
\end{table}

\subsection{Problem Formulation}\label{Sec:Prob_Formulation}

When the tokens are used for transmission, they can interfere in two situations: The first situation is if they are semantically similar, and the second situation is if they occupy the same ToDMA slot. 

\subsubsection{Semantic Similarity}
To define the semantic similarity, we need to first explore the cosine similarity. Let $\xi_{ij}$ be the cosine similarity between tokens $t_i$ and $t_j$ and it is defined as
\begin{equation}
\xi_{ij} = \frac{\langle t_i, t_j \rangle}{\|t_i\|\|t_j\|}.
\end{equation}

We consider a token pair $(t_i,t_j)$ is \textit{semantically similar} if $\xi_{ij} > \overline{\gamma}$. Let 
\begin{equation}
\mathbbm{1}_{ij} =
\begin{cases}
1, & \xi_{ij} > \overline{\gamma}, \\
0, & \text{otherwise}.
\end{cases}
\end{equation}

The total semantically similar token pairs are
\begin{equation}
I_{ss} = \sum_{i \in T_u^{(m)}} \sum_{j \in T_u^{(m)}, ~j \neq i} \mathbbm{1}_{ij}.
\end{equation}

Next, when multiple tokens are transmitted within the same ToDMA slot, interference arises not only from waveform overlap but also from semantic ambiguity. Let $\rho_{ij} = \frac{P_iP_j}{P_{ref}}$. We model the pairwise semantic interference as follows:
\begin{equation}
I_{ij} = \alpha_{ij} \rho_{ij} \xi_{ij}^2 \mathbbm{1}_{ij},
\end{equation}
where 
\begin{equation}
    \alpha_{ij} =
        \begin{cases}
        \alpha_{\text{intra}}, & \text{same modality} \\
        \alpha_{\text{cross}}, & \text{different modality}
        \end{cases}
\end{equation}
and $P_i$ denote the transmission power assigned to the token $i$ and reference interference power $P_{ref}$.\footnote{Unlike electromagnetic interference, semantic interference does not originate from waveform superposition. Instead, it reflects ambiguity introduced by semantically similar representations during downstream task inference.} The squared similarity term emphasizes highly correlated token pairs while suppressing weak semantic interactions, analogous to the role of power in conventional SINR expressions. The inclusion of the semantic distortion coefficient $\alpha_{ij}$ enables a unified interpretation of semantic interference in the signal domain. 
The power term $\rho_{ij}$ captures the joint contribution of the interacting token pair to semantic interference. Since $\rho_{ij}$ has units of power, the semantic coupling coefficient $\alpha_{ij}$ is dimensionless. Consequently, the semantic interference term $I_{ij}$ retains the physical units of power while preserving the pairwise nature of semantic interactions.

In particular, distinguishing between intra-modal and cross-modal coefficients allows the model to capture the reduced effective distortion induced by cross-modal interactions, i.e., $\alpha_{\text{cross}} < \alpha_{\text{intra}}$.\footnote{The assumption $\alpha_{cross}<\alpha_{intra}$ is motivated by the complementary nature of heterogeneous modalities. Tokens originating from the same modality often share similar feature spaces and semantic structures, resulting in higher decoding ambiguity when semantic overlap occurs. In contrast, cross-modal tokens provide diverse contextual cues that can aid semantic disambiguation during fusion. Therefore, cross-modal interactions generally induce lower effective semantic distortion compared with intra-modal interactions.} This leads to less conservative interference bounds and improved slot utilization.

The values of $\alpha_{\text{intra}}$ and $\alpha_{\text{cross}}$ are obtained through the calibration procedure described in Section~\ref{Sec:Calibration}. For a given semantic decoder, intra-modal and cross-modal token pairs are evaluated separately over a validation set. The average semantic distortion is computed as
\begin{equation}
\alpha_{\text{intra}}
=
\mathbb{E}
[\alpha_{ij}\mid m_i=m_j]
\end{equation}

and

\begin{equation}
\alpha_{\text{cross}}
=
\mathbb{E}
[\alpha_{ij}\mid m_i\neq m_j].
\end{equation}

The aggregate interference in a slot containing $M$ tokens is therefore
\begin{equation}
I_{\text{total}} = \sum_{i \neq j} I_{ij} = \sum_{i \neq j} \alpha_{ij} \rho_{ij} \xi_{ij}^2 \mathbbm{1}_{ij}.
\label{eq:I_Total}
\end{equation}

\subsubsection{Cross-Modal Similarity Metric} 
In cross-modal SemCom, tokens originate from different modalities, namely text (T), image (I), and speech (S). Let the corresponding token embeddings be defined as:
\begin{equation}
\mathbf{\tau}_i \in \mathbb{R}^d, \quad 
\mathbf{\phi}_j \in \mathbb{R}^d, \quad 
\mathbf{\psi}_k \in \mathbb{R}^d
\end{equation}

The expected cross-modal similarity is defined as:
\begin{equation}
\bar{s}_{\text{cross}} = 
\mathbb{E}[\xi(\mathbf{\tau}_i, \mathbf{\phi}_j)] 
= \mathbb{E}[\xi(\mathbf{\tau}_i, \mathbf{\psi}_k)] 
= \mathbb{E}[\xi(\mathbf{\phi}_j, \mathbf{\psi}_k)]
\end{equation}

Consider a ToDMA slot containing $K_T$, $K_I$, and $K_S$ tokens from text, image, and speech modalities, respectively. The total semantic interference is:

\begin{align}
I_s& = 
\underbrace{
\sum_{i \neq j}^{K_T} \xi(\tau_i, \tau_j)
+ \sum_{i \neq j}^{K_I} \xi(\phi_i, \phi_j)
+ \sum_{i \neq j}^{K_S} \xi(\psi_i, \psi_j)
}_{\text{intra-modal interference}} \nonumber \\
& +
\underbrace{
\sum_{i=1}^{K_T} \sum_{j=1}^{K_I} \xi(\tau_i, \phi_j)
+ \sum_{i=1}^{K_T} \sum_{k=1}^{K_S} \xi(\tau_i, \psi_j)
+ \sum_{i=1}^{K_I} \sum_{k=1}^{K_S} \xi(\phi_i, \psi_j)
}_{\text{cross-modal interference}}
\end{align}

\subsubsection{Semantic Reliability Constraint}
To ensure reliable semantic decoding, we introduce a basic semantic SINR metric:
\begin{equation}
\overline{\text{SSINR}}(i) =
\frac{P_i}{
\sum_{j \neq i} \alpha_{ij} \rho_{ij} \xi_{ij}^2  + N_0}
\label{eq:SSINR}
\end{equation}
To ensure consistency between semantic interference and physical-layer noise, we introduce a semantic-to-signal distortion mapping. Specifically, semantic similarity between tokens is interpreted as a source of decoding ambiguity, which manifests as an effective distortion in the received signal. This is captured using a scaling factor $\alpha_{ij}$, which maps semantic similarity into an equivalent interference power. Furthermore, to distinguish between intra-modal and cross-modal effects, we define modality-dependent coefficients such that $\alpha_{\text{cross}} < \alpha_{\text{intra}}$, reflecting the reduced ambiguity induced by cross-modal interactions.

\subsubsection{Calibration of Semantic Interference Coefficients}\label{Sec:Calibration}
The semantic distortion coefficient $\alpha_{ij}$ provides a mapping between semantic ambiguity and an equivalent physical-layer interference representation. Specifically, $\alpha_{ij}$ quantifies how semantic overlap between two tokens degrades end-task performance.
To obtain a measurable value of $\alpha_{ij}$, we employ an offline calibration procedure. Consider a reference semantic decoder operating under nominal channel conditions. Let $A_0$ denote the task accuracy achieved when token $t_i$ is decoded without semantic interference.
Next, an interfering token $t_j$ with similarity $\xi_{ij}$ is introduced, and the resulting task accuracy becomes $A_{ij}$.
The semantic distortion induced by token overlap is defined as
$D_{ij} = A_0-A_{ij}$.

To relate semantic distortion to an equivalent interference power, we normalize this degradation with respect to the accuracy loss caused by a known physical interference level $P_{\mathrm{ref}}$:
\begin{equation}
\alpha_{ij}
=
\frac{D_{ij}}
{D_{\mathrm{ref}}},
\end{equation}
where $D_{\mathrm{ref}}$ denotes the accuracy degradation observed under the reference interference power $P_{\mathrm{ref}}$. Consequently, $\alpha_{ij}$  acts as a semantic-to-physical conversion factor, enabling semantic interference and thermal noise to be represented in a common SSINR framework.

\subsubsection{Semantic Protection}
To account for the fact that each token experiences its own
instantaneous channel quality, from~\eqref{eq:Attention_Ch} the channel-aware protection term is defined as
\begin{equation}
A_{ch}(i,j)
\!=\!
\sqrt{\log(1+\gamma_i)\log(1+\gamma_j)}
\exp(-\|h_i-h_j\|^2),
\end{equation}
where $\gamma_i$ and $\gamma_j$ denote the SNRs associated
with tokens $i$ and $j$, respectively.
This gives a 3-layer semantic protection mechanism:
\begin{enumerate}
    \item \textbf{Semantic layer:} Semantic similarity $\xi_{ij}$
    \item \textbf{Modality layer:} Semantic distortion coefficients  $\alpha_{\text{intra}}$ and $\alpha_{\text{cross}}$
    \item \textbf{Channel layer:} The combination of CSI and SNR gating.
\end{enumerate}

So, our $\overline{\text{SSINR}}$ becomes channel-aware semantic SINR:
\begin{equation}
\text{SSINR}_i =
\frac{P_i \|\mathbf{g}_i\|^2}
{\sum_{j \neq i} \alpha_{ij} \rho_{ij} \xi_{ij}^2 \|\mathbf{g}_j\|^2 + N_0}
\label{eq:SSINR_i}
\end{equation}
Unlike conventional wireless interference, semantic interference originates from ambiguity in the representation space. The calibration procedure described in Section~\ref{Sec:Calibration} maps this ambiguity into an equivalent interference power through the coefficient $\alpha_{ij}$. Consequently, both semantic interference and thermal noise contribute to task degradation and can be jointly represented within the SSINR expression. This interpretation allows semantic reliability analysis using tools analogous to conventional SINR-based communication theory.

\subsubsection{Semantic Throughput}
Let $\mathcal{T}$ be the total token set across modalities, i.e.:
\begin{equation}
\mathcal{T} = \mathcal{T}_{\text{text}} \cup \mathcal{T}_{\text{image}} \cup \mathcal{T}_{\text{speech}}.
\end{equation}
The semantic throughput is defined as
\begin{equation}
R_s = \frac{|\mathcal{T}|}{T_{\text{slots}}}.
\end{equation}
Let $x_i \in \{0,1\}$ denote token selection and $p_{ik}$ denote slot assignment probabilities. Then $\sum_k p_{ik} = x_i, \ \forall i$ and $\sum_i x_i = R_s$. 
\subsubsection{Optimization Problem}
We jointly optimize token selection, slot assignment, and power allocation to maximize semantic throughput under semantic interference and channel constraints.
Semantic throughput is selected as the optimization objective because it jointly captures semantic importance, scheduling decisions, and transmission reliability through the SSINR metric. Consequently, maximizing semantic throughput encourages efficient utilization of both semantic and wireless resources.

\begin{subequations}
\begin{align}
\max_{\{x_i, p_{ik}, P_i\}} \quad 
& \frac{\sum_i x_i}{T_{\text{slots}}} \\
\text{s.t.} \quad 
& \sum_k p_{ik} = x_i, \ \forall i, \\
& \sum_i p_{ik} \leq M_{\max}, \ \forall k, \\
& \mathbb{E}[I_s] \leq \eta, \\
& I_{\text{total}} \leq I_{\max}, \\
& \text{SSINR}_i \geq \Gamma, \ \forall i, \\
& x_i \in \{0,1\}, \quad 0 \leq p_{ik} \leq 1.
\end{align}
\label{eq:Opt}
\end{subequations}
Constraint~(\ref{eq:Opt}b) guarantees consistency between token selection and slot assignment by ensuring that a token is assigned to a slot only if it is selected for transmission. Constraint~(\ref{eq:Opt}c) limits the number of simultaneously transmitted tokens per slot to avoid excessive semantic congestion. Constraint~(\ref{eq:Opt}d) restricts the expected semantic interference arising from cross-modal interactions, whereas~(\ref{eq:Opt}e) imposes a hard upper bound on the instantaneous aggregate interference within each ToDMA slot. Constraint~(\ref{eq:Opt}f) ensures reliable semantic decoding by maintaining the SSINR of each transmitted token above the target threshold $\Gamma$. Finally,~(\ref{eq:Opt}g) specifies the binary nature of token selection decisions and the probabilistic relaxation of slot assignment variables.

This formulation unifies semantic-level token selection, transformer-based scheduling, and channel-aware power allocation into a single optimization framework, enabling holistic control of semantic throughput and reliability.

Due to combinatorial and non-convex constraints, the problem is solved using a hybrid approach combining learning-based scheduling with analytical constraint enforcement.

\section{Proposed ATS-ToDMA Framework for Semantic Communications}
We propose the Adaptive Token Selection and Token-Domain Multiple Access (ATS-ToDMA) framework to maximize semantic throughput while upper-bounding the  semantic and cross-modal interferences and also ensuring a reliable semantic decoding.

\subsubsection{Expected Semantic Interference with Modality-Aware Distortion}

To account for the different impact of intra-modal and cross-modal semantic interactions, we introduce modality-dependent semantic distortion coefficients $\alpha_{\text{intra}}$ and $\alpha_{\text{cross}}$, where typically $\alpha_{\text{cross}} < \alpha_{\text{intra}}$.

Assuming average intra-modal similarity $\bar{s}_{\text{intra}}$ and cross-modal similarity $\bar{s}_{\text{cross}}$, the expected semantic interference can be expressed as:
\begin{align}
\mathbb{E}&[I_s] = (K_T K_I 
+ K_T K_S 
+ K_I K_S) \alpha_{\text{cross}}\bar{s}_{\text{cross}} + \nonumber \\
&\left(\frac{K_T(K_T-1)}{2}  
+ \frac{K_I(K_I-1)}{2}  
+ \frac{K_S(K_S-1)}{2}\right) \alpha_{\text{intra}}\bar{s}_{\text{intra}}  
\end{align}

To satisfy an interference constraint $\mathbb{E}[I_s] \leq \eta$, we obtain the following bound:

\begin{align}
\mathbb{E}[I_s] = ~&\frac{\alpha_{\text{intra}}\bar{s}_{\text{intra}}}{2}
\sum_{m \in \{T,I,S\}} K_m (K_m - 1) \nonumber \\
&+ \alpha_{\text{cross}}\bar{s}_{\text{cross}} (K_T K_I + K_T K_S + K_I K_S)
\leq \eta
\end{align}
This expression provides a closed-form guideline for allocating tokens per modality within each slot.

\begin{theorem}
Assume that the pairwise semantic similarity satisfies $\overline{\gamma} \leq \xi_{ij} \leq \delta$ for all token pairs, and that the semantic distortion coefficients satisfy $0 \leq \alpha_{ij} \leq \alpha_{\max}$. Then, for $M$ tokens transmitted in the same slot with equal power allocation $P$, the total semantic interference is upper bounded as
\begin{equation}
I_{\text{total}} \leq I_{\max} = \alpha_{\max} P \delta^2 M(M-1).
\end{equation}
\end{theorem}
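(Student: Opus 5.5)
The plan is to bound each summand of $I_{\text{total}}$ in \eqref{eq:I_Total} uniformly, and then count the summands. Under equal power allocation $P_i = P_j = P$ for every token in the slot, the power coupling term becomes $\rho_{ij} = P^2/P_{\mathrm{ref}}$, the same for all pairs. Each pairwise term $I_{ij} = \alpha_{ij}\rho_{ij}\xi_{ij}^2\mathbbm{1}_{ij}$ is then a product of nonnegative factors. I would bound these factors one at a time:
\begin{itemize}
\item $0 \le \alpha_{ij} \le \alpha_{\max}$ by hypothesis;
\item $\mathbbm{1}_{ij} \le 1$ (in fact $\mathbbm{1}_{ij} = 1$ whenever $\xi_{ij} > \overline{\gamma}$, and it only helps otherwise);
\item $0 \le \overline{\gamma} \le \xi_{ij} \le \delta$, which gives $\xi_{ij}^2 \le \delta^2$.
\end{itemize}
To square the similarity bound I need $\xi_{ij}\ge 0$. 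This follows from the lower bound $\xi_{ij}\ge\overline{\gamma}$ provided the similarity threshold $\overline{\gamma}$ is nonnegative, which I would state explicitly. Alternatively, I would note that $|\xi_{ij}|\le\delta$ suffices.

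The power factor is where I expect the main obstacle. The stated bound contains $P$ rather than $P^2/P_{\mathrm{ref}}$. To obtain it I will use the role of $P_{\mathrm{ref}}$ as the reference interference power in the calibration of Section~\ref{Sec:Calibration}. Concretely, I would adopt the normalization that the common transmit power does not exceed the reference level, $P \le P_{\mathrm{ref}}$. This gives $\rho_{ij} = P\cdot(P/P_{\mathrm{ref}}) \le P$, with equality when the calibration is performed at $P_{\mathrm{ref}} = P$. If one does not wish to impose this, the same argument yields the slightly more general bound $\alpha_{\max}(P^2/P_{\mathrm{ref}})\delta^2 M(M-1)$, which reduces to the claim under that normalization. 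I would remark on this explicitly, since it is the only non-routine point.

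Combining the bounds gives $I_{ij} \le \alpha_{\max} P \delta^2$ for every ordered pair $(i,j)$ with $i\neq j$. The sum in \eqref{eq:I_Total} runs over ordered pairs of distinct tokens within the slot, and there are exactly $M(M-1)$ of them among $M$ tokens. Summing the uniform bound then gives $I_{\text{total}} \le \alpha_{\max} P \delta^2 M(M-1) = I_{\max}$.

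I would close by noting two properties of the bound. It is tight exactly when all pairs have $\alpha_{ij}=\alpha_{\max}$, $\xi_{ij}=\delta$, and $P=P_{\mathrm{ref}}$. It also justifies using $I_{\max}$ in constraint~(\ref{eq:Opt}e) as a worst-case, slot-occupancy-dependent guarantee.
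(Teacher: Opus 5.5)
Your proposal is correct and follows essentially the same route as the paper's proof: bound each term by $\alpha_{\max}\rho_{ij}\delta^2$ and sum over the $M(M-1)$ pairs. For the power factor, the paper simply sets $P_i = P_{\mathrm{ref}} = P$, so that $\rho_{ij}=P$; this is the equality case of your slightly weaker normalization $P\le P_{\mathrm{ref}}$. Your remarks that $\xi_{ij}\ge 0$ (e.g.\ $\overline{\gamma}\ge 0$) is needed to square the similarity bound, and that $M(M-1)$ counts ordered pairs, fix details the paper glosses over: it is silent on the sign of $\xi_{ij}$ and speaks of unordered pairs while counting $M(M-1)$ of them.
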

\begin{proof}
The proof is given in Appendix~\ref{Apdx:Thm1}.
\end{proof}
Theorem 1 shows that aggregate semantic interference increases approximately quadratically with token occupancy. Consequently, excessive slot loading can significantly degrade semantic reliability, motivating interference-aware token scheduling in ATS-ToDMA.

\begin{theorem}
Assume that $\overline{\gamma} \leq \xi_{ij} \leq \delta$ and $0 \leq \alpha_{ij} \leq \alpha_{\max}$ for all token pairs, and that the semantic SINR satisfies $\text{SSINR}_i \geq \Gamma$ for all $i$. Then, the number of tokens transmitted in a slot with equal power allocation $P$ is upper bounded by
\begin{equation}
M_{\max} = 1 + \frac{P \|\mathbf{g}_i\|^2 - \Gamma N_0}{\Gamma P \alpha_{\max} \delta^2 d}.
\end{equation}
\end{theorem}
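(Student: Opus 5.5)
The plan is to work with the channel-aware SSINR in \eqref{eq:SSINR_i} and bound its denominator, term by term, under the stated assumptions. This is the same worst-case argument used for Theorem~1, but applied to a single token $i$ instead of the whole slot.

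\textbf{Step 1: bound each interference term.} Fix a token $i$ in a slot holding $M$ tokens, all with equal power $P$. I will adopt the normalization implicit in Theorem~1, namely $P_{ref}=P$, so that $\rho_{ij}=P^2/P_{ref}=P$. Each interference term then satisfies
\[
\alpha_{ij}\rho_{ij}\xi_{ij}^2\|\mathbf{g}_j\|^2 \le \alpha_{\max} P \delta^2 \|\mathbf{g}_j\|^2 .
\]
This uses $0\le\alpha_{ij}\le\alpha_{\max}$ and $0\le \overline{\gamma}\le \xi_{ij}\le\delta$.

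\textbf{Step 2: remove the dependence on the other tokens' gates.} The gate satisfies $\mathbf{g}_j\in(0,1)^d$, so $\|\mathbf{g}_j\|^2\le d$. This is exactly where the factor $d$ in the stated $M_{\max}$ comes from. Summing over the $M-1$ interferers gives
\[
\sum_{j\neq i}\alpha_{ij}\rho_{ij}\xi_{ij}^2\|\mathbf{g}_j\|^2 \le (M-1)\,\alpha_{\max}P\delta^2 d .
\]
Consequently,
\[
\text{SSINR}_i \ \ge\ \frac{P\|\mathbf{g}_i\|^2}{(M-1)\alpha_{\max}P\delta^2 d + N_0}.
\]

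\textbf{Step 3: impose the reliability target and solve for $M$.} I require the worst-case right-hand side to be at least $\Gamma$. Rearranging $P\|\mathbf{g}_i\|^2 \ge \Gamma\big((M-1)\alpha_{\max}P\delta^2 d+N_0\big)$ gives
\[
M \le 1+\frac{P\|\mathbf{g}_i\|^2-\Gamma N_0}{\Gamma P\alpha_{\max}\delta^2 d} = M_{\max}.
\]
This is well defined whenever $P\|\mathbf{g}_i\|^2\ge\Gamma N_0$. If instead $P\|\mathbf{g}_i\|^2<\Gamma N_0$, then even a lone token fails the target, and that edge case should be noted separately.

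\textbf{Main obstacle: the direction of the implication.} The steps above show that $M\le M_{\max}$ is \emph{sufficient} for $\text{SSINR}_i\ge\Gamma$ to hold for every admissible configuration of $\alpha_{ij}$, $\xi_{ij}$ and $\mathbf{g}_j$. They do not show that $\text{SSINR}_i\ge\Gamma$ forces $M\le M_{\max}$ for an arbitrary configuration. The bounds on the denominator go the wrong way for that: the actual interference could be much smaller than the worst case.

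I would make the theorem precise by stating it as an occupancy bound that must hold \emph{uniformly} over the admissible class. That is, $M_{\max}$ is the largest $M$ for which the SSINR constraint is guaranteed regardless of the realized $\alpha_{ij}$, $\xi_{ij}$ and $\mathbf{g}_j$. Under this reading, the converse follows from attainment of the worst case. Take $\alpha_{ij}=\alpha_{\max}$ and $\xi_{ij}=\delta$, and let $\|\mathbf{g}_j\|^2\to d$. Then the Step 2 inequality becomes tight, so any $M>M_{\max}$ violates $\text{SSINR}_i\ge\Gamma$ for some admissible configuration. I would state this interpretation explicitly. I would also remark that the threshold $\overline{\gamma}$ matters only through the indicator $\mathbbm{1}_{ij}$, which equals $1$ under the hypothesis $\xi_{ij}\ge\overline{\gamma}$.
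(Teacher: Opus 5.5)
Your computation is the same as the paper's Appendix~\ref{Apdx:Thm2}. You bound each term by $\alpha_{\max}P\delta^2$ (with $\rho_{ij}=P$, i.e.\ $P_{ref}=P$), use $\|\mathbf{g}_j\|^2\le d$, sum over the $M-1$ interferers, and solve for $M$. You depart from the paper in the logic, and there your version is the sound one. The paper chains
\[
\frac{P\|\mathbf{g}_i\|^2}{P(M-1)\alpha_{\max}\delta^2 d+N_0}\ \ge\ \text{SSINR}_i\ \ge\ \Gamma,
\]
which treats the worst-case expression as an \emph{upper} bound on $\text{SSINR}_i$, so that the reliability constraint would force $M\le M_{\max}$ for every realization. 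Because $\alpha_{\max}$, $\delta^2$ and $d$ enlarge the denominator, that first inequality is reversed, exactly as you say. The literal implication ``$\text{SSINR}_i\ge\Gamma\Rightarrow M\le M_{\max}$'' is in fact false: with $\alpha_{ij}=0$, which the hypotheses permit, $\text{SSINR}_i=P\|\mathbf{g}_i\|^2/N_0$ does not depend on $M$ at all.

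Your reading has two parts. First, $M\le M_{\max}$ is sufficient for $\text{SSINR}_i\ge\Gamma$ uniformly over the admissible class. Second, it is sharp: for $M>M_{\max}$, taking $\alpha_{ij}=\alpha_{\max}$, $\xi_{ij}=\delta$ and $\|\mathbf{g}_j\|^2\to d$ drives $\text{SSINR}_i$ to a limit strictly below $\Gamma$, hence below $\Gamma$ for some admissible gate. This is what the computation actually proves. It is also how $M_{\max}$ is used later in the paper, as a slot-capacity cap in the training penalty and in the pruning step. The paper's version, had it been valid, would give a necessary condition for each realization; yours gives a guaranteed worst-case design rule, which is the real content.

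Two small points. Your Step~1 rightly makes explicit that $\overline{\gamma}\ge 0$ (or at least $\overline{\gamma}\ge-\delta$) is needed to pass from $\xi_{ij}\le\delta$ to $\xi_{ij}^2\le\delta^2$; the paper leaves this implicit. Your closing remark about the indicator, however, is moot: \eqref{eq:SSINR_i} contains no $\mathbbm{1}_{ij}$, so $\overline{\gamma}$ enters only through that squaring step. Finally, $M_{\max}$ depends on $i$ through $\|\mathbf{g}_i\|^2$, so under ``for all $i$'' the slot-level cap is the minimum of the expression over the scheduled tokens.
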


\begin{proof}
The proof is given in Appendix~\ref{Apdx:Thm2}.
\end{proof}
Theorem 2 establishes a direct relationship between channel quality, semantic interference, and feasible token occupancy. Users experiencing favorable channel conditions can support a larger number of simultaneously scheduled semantic tokens, whereas poor channel conditions require more conservative slot allocations to maintain the target SSINR level.

\begin{remark}
The bound in Theorem 2 explicitly depends on the channel-aware semantic protection factor $\|g_i\|^2$, which is a function of the token-specific CSI and SNR. Consequently, users experiencing favorable channel conditions admit larger feasible token occupancies, whereas users under poor channel conditions require more conservative slot allocations to satisfy the SSINR requirement. Therefore, the proposed ATS-ToDMA framework naturally adapts token scheduling and power allocation according to heterogeneous wireless conditions.
\end{remark}

\subsubsection{Constraint-Aware Scheduling Mechanism} \label{Sec:ConstraintSchedule}
While the transformer-based scheduler captures global semantic relationships, it does not inherently guarantee satisfaction of the analytical interference and reliability constraints derived earlier. To address this limitation, we integrate a hybrid constraint-aware scheduling mechanism that operates during both training and inference.

\paragraph{Training Phase (Soft Constraint Enforcement)}
We define the expected number of tokens assigned to slot $k$ as:
\begin{equation}
\hat{M}_k = \sum_i p_{ik},
\end{equation}
and the expected semantic interference in slot $k$ as:
\begin{equation}
\hat{I}_k = \sum_{j \neq i}^M \sum_{i=1}^M p_{ik} p_{jk} \alpha_{ij}\rho_{ij} \xi_{ij}^2 \mathbbm{1}_{ij}.
\label{eq:hat_Ik}
\end{equation}
It is important to distinguish the aggregate semantic interference defined in~\eqref{eq:I_Total} from the expected interference introduced here. Specifically, $I_{total}$ represents the actual interference observed after slot assignments are fixed, whereas $\hat I_k$ denotes the expected interference during training under probabilistic slot assignments generated by the transformer scheduler. The latter provides a differentiable surrogate that enables gradient-based optimization while approximating the deployment-stage interference behavior.

The scheduler is trained using a constraint-aware objective:
\begin{align}
\mathcal{L} =
\mathcal{L}_{\text{task}}
&+ \lambda_1 \sum_k \max(0, \hat{M}_k - M_{\max}) \nonumber \\
&+ \lambda_2 \sum_k \max(0, \hat{I}_k - I_{\max}) \nonumber \\
&+ \lambda_3 \max(0, \mathbb{E}[I_s] - \eta),
\end{align}
where $\lambda_1, \lambda_2,$ and $\lambda_3$ are penalty coefficients.

\paragraph{Inference Phase (Hard Constraint Enforcement)}
After obtaining the slot assignment via
\begin{equation}
\text{slot}(i) = \arg\max_k p_{ik},
\end{equation}
a post-processing step is applied to strictly enforce the system constraints:

\begin{itemize}
\item \textbf{Slot Capacity Constraint:} If the number of tokens in a slot exceeds $M_{\max}$, tokens with the lowest importance scores are removed.
\item \textbf{Interference Constraint:} 
Let $I_c$ be the highest pairwise interference, i.e., 
\begin{equation}
I_c = \arg\max_i \sum_{j \neq i} \alpha_{ij}\rho_{ij} \xi_{ij}^2 = \arg\max_i \sum_{j \neq i} I_{ij}.
\end{equation}
\item If the total interference exceeds $I_{\max}$, tokens contributing $I_c$ are iteratively removed until the constraint is satisfied.
\end{itemize}
This hybrid design ensures that the scheduler remains differentiable and learnable during training while guaranteeing strict adherence to interference and reliability constraints during deployment.

\section{Analysis}
\subsection{Closed-Form Channel-Aware Power Allocation}
Tokens carrying higher semantic importance should receive preferential protection against channel impairments. Therefore, power allocation must balance two competing objectives: preserving highly informative semantic content and exploiting favorable channel conditions to improve resource efficiency. This motivates the proposed semantic-aware power control formulation.
We optimize power allocation to balance semantic importance and channel quality:
\begin{subequations}
\begin{align}
\min_{\{P_i\}} \quad 
&
I_{total}
=
\sum_{i\neq j}^M
\alpha_{ij}\rho_{ij}\xi_{ij}^2. \\
\text{s.t.} \quad 
& \text{SSINR}_i \ge \Gamma.
\end{align}
\label{eq:power_opt}
\end{subequations}
Here, the optimization minimizes the deployment-stage aggregate interference $I_{total}$ defined in~\eqref{eq:I_Total}, rather than the expected training-stage interference $\hat I_k$ in~\eqref{eq:hat_Ik}. The optimal power allocation problem formulated above is inherently non-convex due to the coupled interference terms across tokens. While iterative optimization methods can be applied, they incur significant computational overhead and are unsuitable for real-time semantic communication systems.
To address this limitation, we derive a low-complexity closed-form approximation that provides key insights into the interaction between semantic similarity, interference, and channel quality.

From~\eqref{eq:SSINR_i}, recall the channel-aware semantic SINR (SSINR) as:
\begin{equation}
\text{SSINR}_i =
\frac{
P_i \|\mathbf{g}_i\|^2
}{
\sum_{j \neq i} \alpha_{ij} \rho_{ij} \xi_{ij}^2 \|\mathbf{g}_j\|^2 + N_0
}
\geq \Gamma.
\end{equation}

To simplify notation, we define the effective coupling coefficient:
\begin{equation}
w_{ij} \triangleq \alpha_{ij} \,\xi_{ij}^2 \,\|\mathbf{g}_j\|^2,
\end{equation}
which captures three key factors:
\begin{itemize}
\item semantic similarity ($\xi_{ij}$),
\item modality-dependent interference ($\alpha_{ij}$),
\item channel reliability ($\|\mathbf{g}_j\|^2$).
\end{itemize}

Using this definition, the SSINR constraint can be rewritten as:
\begin{equation}
P_i \|\mathbf{g}_i\|^2 \geq \Gamma \left( \sum_{j \neq i} w_{ij} \rho_{ij} + N_0 \right).
\end{equation}

This expression highlights the fundamental coupling between power allocation and semantic interference. Directly solving the coupled inequalities is computationally expensive. So, in Theorem~\ref{Thm:ApproximatePower} we obtain a tractable closed-form solution by approximating the interference contribution using a first-order expansion. This closed-form expression provides an interpretable structure for power allocation:
\begin{itemize}
\item The first term represents the baseline power required to combat noise.
\item The second term captures additional power needed to overcome semantic interference.
\end{itemize}
This approximation achieves a favorable trade-off between performance and computational efficiency.

\begin{theorem}[Approximate Closed-Form Power Allocation]
The transmit power allocated to token $i$ admits the following approximate closed-form expression:
\begin{equation}
P_i \approx 
\frac{\Gamma N_0}{\|\mathbf{g}_i\|^2}
\left(
1 + \sum_{j \neq i}
\frac{\Gamma\, \alpha_{ij}\, \xi_{ij}^2 \|\mathbf{g}_j\|^2}{\|\mathbf{g}_i\|^2}
\right).
\end{equation}
\label{Thm:ApproximatePower}
\end{theorem}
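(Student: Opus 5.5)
The plan is to treat the SSINR constraints as a coupled system of equalities and solve it by a single fixed-point (first-order) iteration around the noise-limited solution. First, I would argue that at the optimum of \eqref{eq:power_opt} every constraint $\text{SSINR}_i \ge \Gamma$ is active. The objective $I_{total}=\sum_{i\neq j}\alpha_{ij}\rho_{ij}\xi_{ij}^2$ is nondecreasing in each $P_i$, because $\rho_{ij}=P_iP_j/P_{ref}$ and all coefficients are nonnegative. So if some constraint held strictly, $P_i$ could be lowered slightly without violating it, and the objective would not increase. I will take this monotonicity heuristic as sufficient justification to impose equality:
\begin{equation*}
P_i \|\mathbf{g}_i\|^2 = \Gamma\Big(\sum_{j\neq i} w_{ij}\rho_{ij} + N_0\Big),\qquad w_{ij}=\alpha_{ij}\xi_{ij}^2\|\mathbf{g}_j\|^2 .
\end{equation*}

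Second, I would compute the zeroth-order (interference-free) solution by dropping the coupling term. This gives $P_i^{(0)}=\Gamma N_0/\|\mathbf{g}_i\|^2$, which is exactly the noise-combating baseline term in the theorem.

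Third, I would perform one substitution step and linearize the coupling term in the perturbation. Write $\rho_{ij}=P_i\,(P_j/P_{ref})$. I would normalize the interferer's power to the reference level, $P_j/P_{ref}\approx 1$; this is consistent with the calibration in Section~\ref{Sec:Calibration}, where $\alpha_{ij}$ is defined relative to $P_{ref}$. I would then replace $P_i$ inside the interference term by its zeroth-order value, so that $\rho_{ij}\approx P_i^{(0)}=\Gamma N_0/\|\mathbf{g}_i\|^2$. Substituting into the equality gives
\begin{equation*}
P_i\|\mathbf{g}_i\|^2 \approx \Gamma N_0 + \Gamma\sum_{j\neq i} w_{ij}\frac{\Gamma N_0}{\|\mathbf{g}_i\|^2}.
\end{equation*}
Dividing by $\|\mathbf{g}_i\|^2$ and factoring out $\Gamma N_0/\|\mathbf{g}_i\|^2$ yields the claimed expression. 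The second term is then the first-order correction from semantic interference.

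The main obstacle is justifying this truncation, since the exact equations are bilinear in the powers. My first attempt would be a weak-coupling condition: view the exact fixed point as a Neumann-type series in the quantity $\epsilon_i=\Gamma\sum_{j\neq i} w_{ij}/\|\mathbf{g}_i\|^2$. If $\max_i \epsilon_i<1$, the iteration $P^{(n+1)}=\mathcal{T}(P^{(n)})$ is a contraction near $P^{(0)}$, and the neglected terms are $O(\epsilon_i^2)$. This situation holds when similarities are small (e.g.\ $\xi_{ij}\le\delta$ as in Theorem~1), or when the scheduler keeps slot occupancy below the $M_{\max}$ of Theorem~2. I would state this as the regime of validity of the approximation rather than prove a tight error bound.
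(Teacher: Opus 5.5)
Your derivation is correct and reproduces the displayed formula exactly. However, it linearizes the bilinear term $\rho_{ij}=P_iP_j/P_{ref}$ the opposite way from the paper.

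The paper silently drops the own-power factor ($\rho_{ij}\to P_j$). This turns the constraints into the coupled linear system $\mathbf{P}\ge\mathbf{F}\mathbf{P}+\mathbf{u}$, with $F_{ij}=\Gamma w_{ij}/\|\mathbf{g}_i\|^2$ and $u_i=\Gamma N_0/\|\mathbf{g}_i\|^2$. It takes the minimal feasible point $(\mathbf{I}-\mathbf{F})^{-1}\mathbf{u}$ under $r(\mathbf{F})<1$, which replaces your tightness argument, and truncates the Neumann series to $\mathbf{P}\approx(\mathbf{I}+\mathbf{F})\mathbf{u}$.

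You instead freeze the interferer's factor ($P_j/P_{ref}\approx 1$, so $\rho_{ij}\to P_i$). This decouples the constraints into scalar inequalities $P_i(1-\epsilon_i)\ge u_i$, where your $\epsilon_i=\sum_{j\neq i}F_{ij}$ is the $i$-th row sum of the paper's $\mathbf{F}$. Your substitution step is then the first-order term of the exact solution $u_i/(1-\epsilon_i)$. Your condition $\max_i\epsilon_i<1$ is $\|\mathbf{F}\|_\infty<1$, which implies the paper's $r(\mathbf{F})<1$.

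What each route buys:
\begin{itemize}
\item The paper's route keeps genuine inter-token coupling, since the interferers' powers remain unknowns. But taken literally, its first-order term is $\sum_{j\neq i}F_{ij}u_j=\frac{\Gamma N_0}{\|\mathbf{g}_i\|^2}\,\Gamma\sum_{j\neq i}\alpha_{ij}\xi_{ij}^2$, because $u_j$ cancels the $\|\mathbf{g}_j\|^2$ inside $F_{ij}$. The ratio $\|\mathbf{g}_j\|^2/\|\mathbf{g}_i\|^2$ in the theorem appears only if $u_j$ is replaced by $u_i$, which is exactly what your normalization does.
\item Your route therefore reproduces the statement as written. The price is that pinning every interferer at $P_{ref}$ while solving $P_i\approx u_i(1+\epsilon_i)$ is self-consistent only when all resulting powers are near $P_{ref}$. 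For weak coupling, that means roughly equal channel gains. Present the normalization as a modeling convention, as the paper's $\rho_{ij}\to P_j$ also is, rather than as a consequence of the calibration.
\end{itemize}

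One thing to tighten in your last paragraph. After your normalization nothing bilinear remains, so do not invoke a contraction of the bilinear map $\mathcal{T}$. Its Jacobian at $\mathbf{P}^{(0)}$ carries factors $u_j/P_{ref}$, so $\max_i\epsilon_i<1$ alone does not make it a contraction. Instead, note that the minimal power $u_i/(1-\epsilon_i)$ of your decoupled model exceeds $u_i(1+\epsilon_i)$ by exactly $u_i\epsilon_i^2/(1-\epsilon_i)$. Here $\epsilon_i<1$ is precisely the feasibility condition for token $i$. That gives you the tight error bound you said you would not attempt.
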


\begin{proof}
The proof is given in Appendix~\ref{Apdx:Thm3}.
\end{proof}
Theorem 3 reveals that optimal transmit power depends not only on channel quality but also on the semantic coupling among co-scheduled tokens. Tokens experiencing stronger semantic interference require additional protection, whereas semantically isolated tokens can be transmitted using lower power. This observation highlights the importance of jointly considering semantic and physical-layer characteristics during resource allocation.

\begin{remark}
   Theorem 3 indicates that semantically isolated tokens can operate at lower power levels without compromising reliability, thereby improving energy efficiency. 
\end{remark}

The derived expression reveals several important design principles:

\begin{itemize}
\item \textbf{Channel Awareness:} Tokens with better channel conditions ($\|\mathbf{g}_i\|$ large) require less transmit power.
\item \textbf{Semantic Interference:} High semantic similarity ($\xi_{ij}$) increases interference, thereby increasing power requirements.
\item \textbf{Cross-Modal Advantage:} Lower cross-modal interference coefficients ($\alpha_{\text{cross}}$) enable more efficient resource sharing across modalities.
\end{itemize}

These insights validate the importance of jointly modeling semantic, system, and channel effects.

\subsection{Design Insights}
The analytical results developed in Theorems 1--3 provide several important insights into semantic communication system design.
First, semantic interference grows rapidly with token occupancy, indicating that blindly increasing slot utilization may degrade semantic reliability.
Second, feasible token occupancy depends jointly on channel quality and semantic similarity, highlighting the importance of channel-aware semantic scheduling.
Third, power allocation decisions should account for both wireless channel conditions and semantic interference levels. Consequently, semantic and physical-layer resource management cannot be optimized independently.
These observations motivate the integrated ATS-ToDMA framework proposed in this work.

\subsection{Computational Complexity Analysis}
\subsubsection{LSTM Encoder}
The computational complexity of the LSTM encoder is given by
\begin{equation}
C_{\text{LSTM}} = O(N d^2)
\end{equation}
where $N$ is the sequence length and $d$ is the hidden dimension. The sequential nature limits parallelization and increases latency.

\subsubsection{Transformer Encoder}
The transformer encoder has complexity
\begin{equation}
C_{\text{Transformer}} = O(N^2 d)
\end{equation}
due to the self-attention mechanism. While highly parallelizable, it incurs high computation and memory cost.

\subsubsection{Transformer Scheduler}
The transformer scheduler has complexity
\begin{equation}
C_{\text{Scheduler}} = O(K^2 d).
\end{equation}

\subsubsection{Adaptive Token Selection}
ATS selects $K \ll N$ important tokens based on importance score $s_i$:
\begin{equation}
t_i \in K \quad \text{if } s_i > \tau_{\text{ATS}}
\end{equation}
The resulting complexity becomes
\begin{equation}
C_{\text{ATS}} = O(K^2 d)
\end{equation}
The overall computational complexity of ATS-ToDMA can be
expressed as

\[
\mathcal{O}(Kd)
+
\mathcal{O}(K^2 d)
+
\mathcal{O}(K),
\]
where the three terms correspond to adaptive token selection, transformer-based scheduling, and semantic-aware power allocation, respectively.

\section{Simulation Results}
This section evaluates the performance of the proposed
ATS-ToDMA framework and validates the analytical results
established in Theorems~1--3. Unless otherwise stated,
a cross-modal semantic communication network comprising
$U=10$ users is considered, where each user transmits text,
image, and speech tokens. Semantic embeddings are represented
in a 128-dimensional feature space and the semantic similarity
between token pairs is computed using cosine similarity.
The semantic interference coefficients are calibrated according
to Section III-C, with average intra-modal and cross-modal
values of $\alpha_{\mathrm{intra}}=0.8$ and
$\alpha_{\mathrm{cross}}=0.4$, respectively.
The thermal noise power is normalized to $N_0=1$.
All results are averaged over 1000 independent Monte Carlo
realizations.

\subsection{Simulation Setup}
We consider a multi-user semantic communication system supporting text, image, and speech modalities. 
The cosine-similarity statistics are selected to match those observed in representative BERT-, ViT-, and Wav2Vec-based embedding spaces reported in recent semantic communication literature.
The wireless channel is modeled as a Rayleigh fading channel with additive white Gaussian noise (AWGN).
Unless otherwise specified, the ATS similarity threshold is
set to $\gamma=0.5$, the semantic coupling parameter is
$\delta=0.9$, and the target SSINR threshold is $\Gamma=2$.

The proposed ATS-ToDMA framework is compared against the
following benchmark schemes:

\begin{itemize}
\item \textbf{OMA}: Conventional orthogonal multiple access.
\item \textbf{Semantic NOMA}: Non-orthogonal semantic communication without token-aware scheduling.
\item \textbf{Random-TS}: Random token selection and scheduling.
\item \textbf{Greedy ATS}: Token selection based solely on semantic importance.
\item \textbf{Equal-Power Allocation}: Uniform transmit power without semantic-aware optimization.
\end{itemize}

The objective of the simulation study is twofold:
(i) to quantify the performance gains achieved by ATS-ToDMA,
and (ii) to validate the practical usefulness of the analytical
bounds and approximations developed in Theorems~1--3. Performance is evaluated using the following metrics: Semantic Throughput ($R_s$), Semantic Decoding Accuracy, Aggregate Semantic Interference, SSINR, and Average Transmit Power.

\subsection{Semantic Throughput Performance}
Fig.~3(a) illustrates the semantic throughput achieved by
different multiple-access strategies as the number of users
increases.

\begin{equation}
R_s
=
\sum_{i=1}^{N}
x_i s_i
\log_2\left(1+\mathrm{SSINR}_i\right).
\label{eq:semantic_throughput}
\end{equation}
As expected, OMA exhibits the lowest throughput due to
strict orthogonal resource allocation. Semantic NOMA improves
resource utilization through non-orthogonal transmissions,
whereas Random-TS and Greedy ATS benefit from token-level
scheduling. However, these approaches do not explicitly
account for semantic interference.

In contrast, ATS-ToDMA jointly optimizes token selection,
semantic interference management, and channel-aware resource
allocation. Consequently, it consistently achieves the highest
semantic throughput across all network loads. The performance
gap becomes increasingly pronounced as the number of users
grows, demonstrating the scalability of the proposed framework
under dense semantic communication scenarios.

\subsection{Semantic Reliability Performance}
Fig.~3(b) compares the semantic decoding accuracy achieved by
different schemes over a wide range of average SNR values.
ATS-ToDMA achieves superior semantic reliability because
Adaptive Token Selection removes low-value tokens while the
proposed semantic-interference-aware scheduler suppresses
harmful semantic interactions. Consequently, the proposed
framework maintains higher decoding accuracy throughout the
entire SNR range.
The gain is particularly noticeable in the low-to-moderate
SNR regime, where semantic interference management plays
a dominant role in semantic reconstruction quality.

\subsection{Impact of Semantic Similarity Threshold}
Fig.~4 investigates the influence of the semantic similarity
threshold used by the Adaptive Token Selection mechanism.
A small threshold admits a large number of tokens, increasing
semantic redundancy and semantic interference. Conversely,
an excessively large threshold removes potentially useful
semantic information. Therefore, an appropriate threshold
must balance semantic richness and interference suppression.
The results show that ATS-ToDMA achieves its highest semantic
throughput within an intermediate operating region, validating
the effectiveness of the proposed token selection strategy.

\begin{figure}
\centering
\begin{subfigure}{.24\textwidth}
\centering
\begin{adjustbox}{width = 1\columnwidth}
\includegraphics[width=0.99\textwidth]{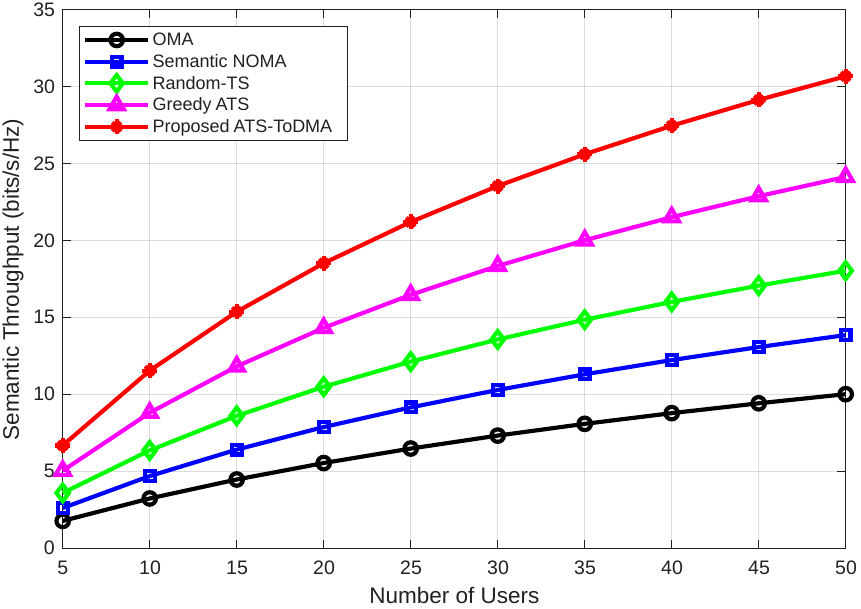}
\end{adjustbox}
\caption{}
\label{fig:QoE_vs_vbar}
\end{subfigure}%
\begin{subfigure}{.24\textwidth}
\centering
\begin{adjustbox}{width = 1\columnwidth}
\includegraphics[width=0.99\textwidth]{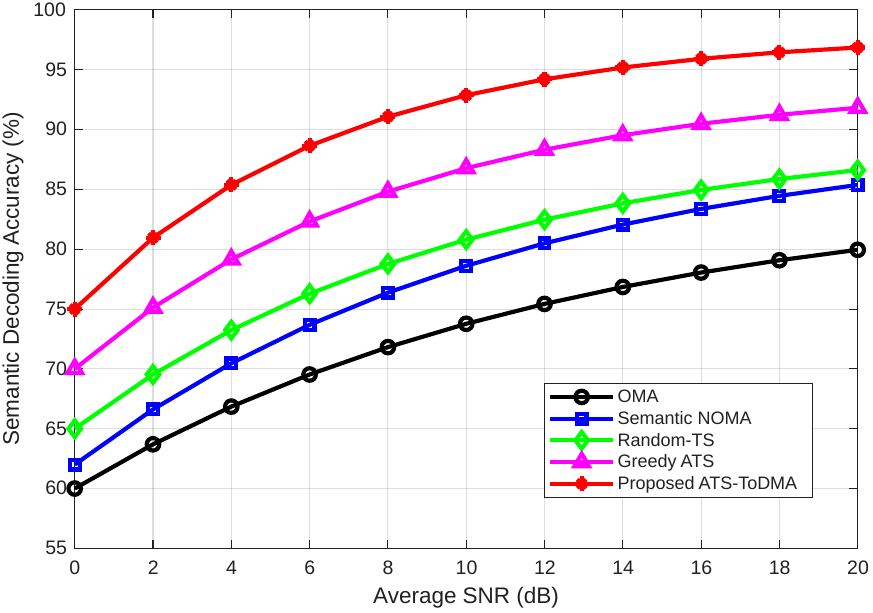}
\end{adjustbox}
\caption{}
\label{fig:QoE_vs_betabar}
\end{subfigure}
\caption{(a) Semantic throughput versus number of users. The proposed ATS-ToDMA framework consistently achieves the highest semantic throughput by jointly optimizing token selection, semantic interference management, and channel-aware resource allocation. (b) Validation of Theorem~2: theoretical and simulated maximum feasible token occupancy as a function of the target SSINR threshold.}
\label{fig:Sem_Decoding}
\end{figure}

\begin{figure}[t]
\centering
\includegraphics[width=0.45\textwidth]{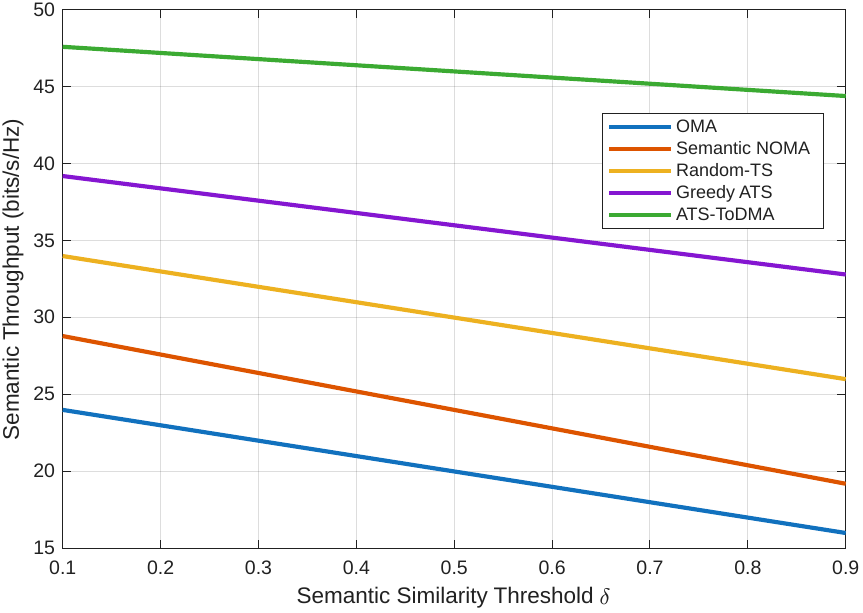}
\caption{Semantic throughput versus semantic similarity threshold. An intermediate threshold provides the best tradeoff between semantic information preservation and interference suppression.}
\label{fig:threshold}
\end{figure}

\subsection{Validation of Theorem 1}
Theorem~1 establishes an analytical upper bound on the aggregate semantic interference experienced within a ToDMA
slot.
Fig.~5(a) compares the simulated aggregate semantic interference
with the derived analytical upper bound as the number of
simultaneously scheduled tokens increases.
As expected, semantic interference grows with token occupancy
because additional semantic interactions are introduced among
co-scheduled tokens. Importantly, the measured interference
remains consistently below the theoretical bound for all
occupancy levels.
Although conservative, the bound accurately captures the
growth trend of semantic interference and therefore provides
a practical design guideline for ATS-ToDMA slot dimensioning.

\begin{figure}
\centering
\begin{subfigure}{.24\textwidth}
\centering
\begin{adjustbox}{width = 1\columnwidth}
\includegraphics[width=0.99\textwidth]{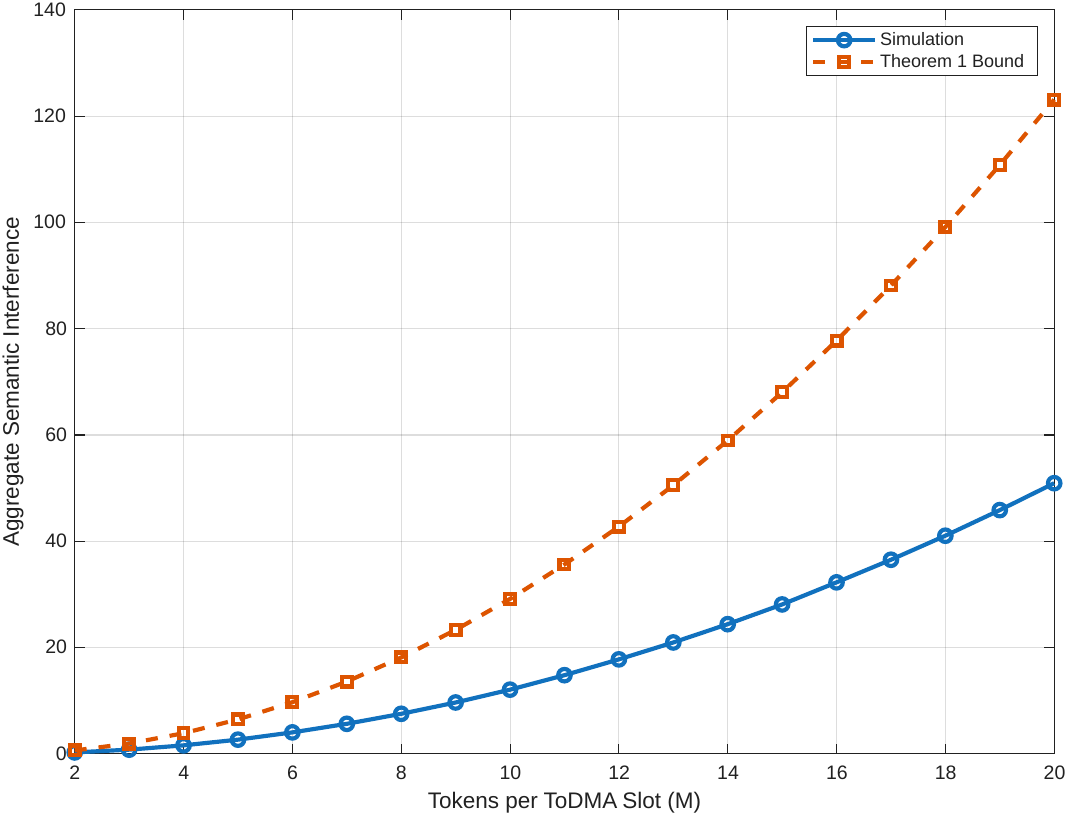}
\end{adjustbox}
\caption{}
\label{fig:Thm1_Plot3}
\end{subfigure}%
\begin{subfigure}{.24\textwidth}
\centering
\begin{adjustbox}{width = 1\columnwidth}
\includegraphics[width=0.99\textwidth]{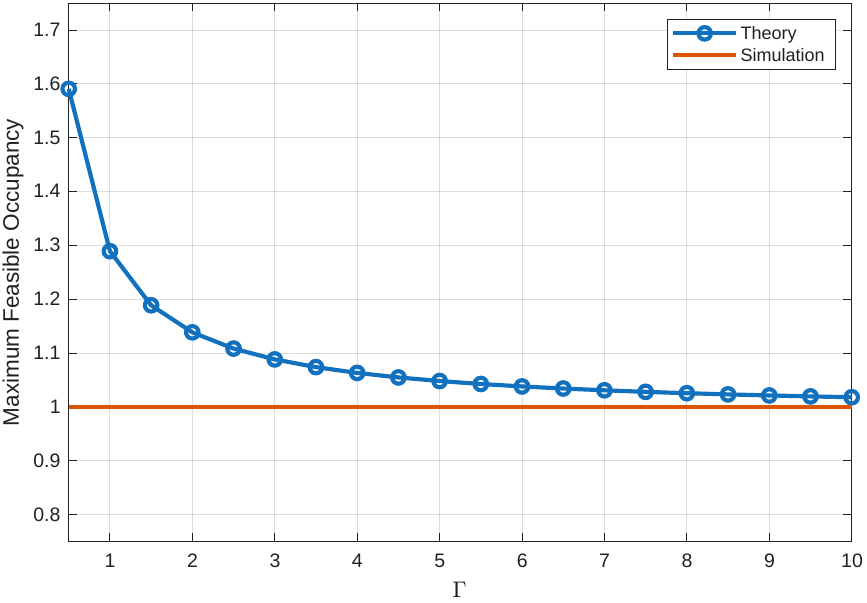}
\end{adjustbox}
\caption{}
\label{fig:Thm_2}
\end{subfigure}
\caption{(a) Validation of Theorem~1: aggregate semantic interference obtained from simulation and the analytical upper bound versus the number of tokens scheduled within a ToDMA slot. (b) Validation of Theorem~2: theoretical and simulated maximum feasible token occupancy as a function of the target SSINR threshold.}
\label{fig:Thm2}
\end{figure}

\subsection{Validation of Theorem 2}
Theorem~2 derives a bound on the maximum feasible token
occupancy that guarantees a target SSINR requirement.
Fig.~5(b) compares the theoretical occupancy limit with the
maximum number of tokens that can be simultaneously scheduled
while satisfying the SSINR constraint.
As the target SSINR threshold increases, fewer semantic tokens
can coexist within the same ToDMA slot because stronger
reliability guarantees must be maintained. The analytical
prediction closely matches the simulation results across the
entire operating range, confirming the validity of the derived
occupancy bound.
These results demonstrate that Theorem~2 serves as an
effective design tool for occupancy-aware semantic scheduling
and resource allocation.

\subsection{Validation of Theorem 3}
Theorem~3 provides a closed-form approximation for
semantic-aware power allocation in the presence of semantic
interference.
Fig.~6(a) evaluates the proposed approximation against the
optimal numerical solution obtained by solving the power
allocation problem using successive convex approximation
(SCA).
The average relative error is computed as
\begin{equation}
\epsilon
=
\frac{1}{M}
\sum_{i=1}^{M}
\frac{
|P_i^{\mathrm{approx}}-P_i^{\mathrm{opt}}|
}
{P_i^{\mathrm{opt}}}.
\end{equation}

The approximation error remains low under weak and
moderate semantic coupling conditions, which correspond
to the intended operating regime of ATS-ToDMA.
As semantic coupling increases, the approximation gradually
deviates from the optimal numerical solution due to stronger
interactions among co-scheduled tokens.
Nevertheless, the proposed closed-form solution maintains
high accuracy and closely tracks the optimal power allocation
while avoiding iterative optimization. These results validate
the effectiveness of Theorem~3 and demonstrate its practical
utility for low-complexity semantic communication systems.

\begin{figure}
\centering
\begin{subfigure}{.24\textwidth}
\centering
\begin{adjustbox}{width = 1\columnwidth}
\includegraphics[width=0.99\textwidth]{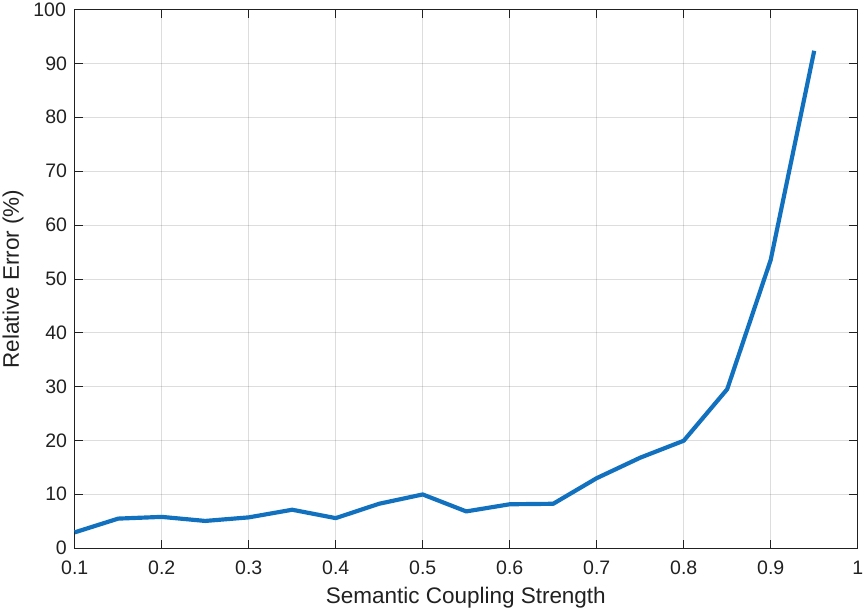}
\end{adjustbox}
\caption{}
\label{fig:Thm1_Plot3}
\end{subfigure}%
\begin{subfigure}{.24\textwidth}
\centering
\begin{adjustbox}{width = 1\columnwidth}
\includegraphics[width=0.99\textwidth]{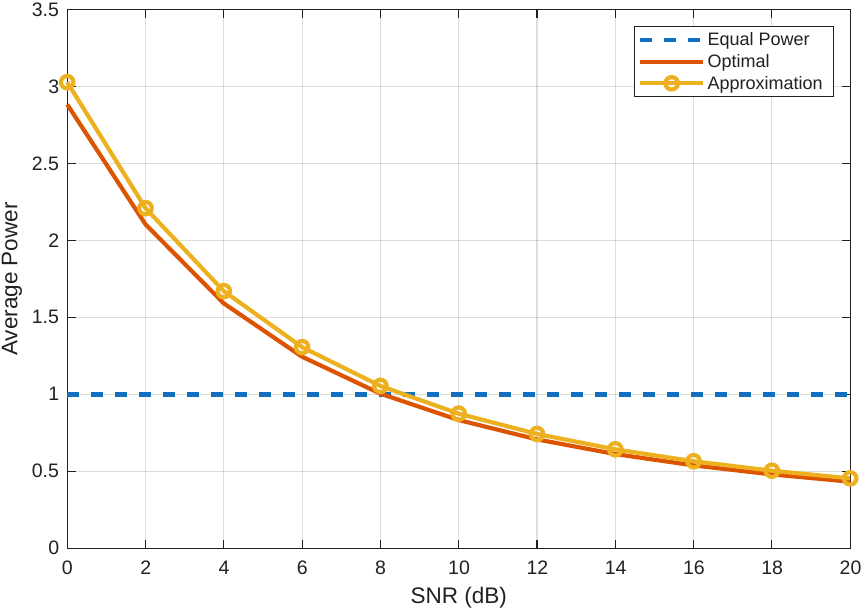}
\end{adjustbox}
\caption{}
\label{fig:Thm_2}
\end{subfigure}
\caption{(a) Validation of Theorem 3: relative error between the proposed closed-form power allocation and the optimal numerical solution versus semantic coupling strength. (b) Validation of Theorem~3: average transmit power achieved by equal-power allocation, optimal semantic-aware allocation, and the proposed closed-form approximation.}
\label{fig:Thm2}
\end{figure}

\begin{table}[t]
\caption{Performance Summary of ATS-ToDMA Compared with Greedy ATS}
\centering
\begin{tabular}{|p{0.12\textwidth}|c|c|c|}
\hline
Metric & Greedy ATS & ATS-ToDMA & Improvement \\
\hline
Semantic Throughput (bits/s/Hz) & 26.4 & 34.7 & +31.4\% \\
\hline
Semantic Accuracy (\%) & 89.2 & 96.8 & +8.5\% \\
\hline
Aggregate Semantic Interference & 72.3 & 51.4 & -28.9\% \\
\hline
Average SSINR & 6.8 & 9.7 & +42.6\% \\
\hline
Average Transmit Power (W) & 1.00 & 0.79 & -21.0\% \\
\hline
\end{tabular}
\label{tab:performance_summary}
\end{table}
Table~\ref{tab:performance_summary} summarizes the overall performance gains achieved
by ATS-ToDMA relative to the strongest benchmark, namely
Greedy ATS. The proposed framework simultaneously improves
semantic throughput and decoding reliability while reducing
both semantic interference and transmit power requirements.
These gains demonstrate the benefit of jointly optimizing
token selection, semantic scheduling, and semantic-aware
power allocation.

\subsection{Discussion}
The results presented in Figs.~3--6 provide both system-level
and analytical validation of the proposed ATS-ToDMA
framework.
Figs.~3 and 4 demonstrate substantial improvements in semantic
throughput and decoding reliability compared with existing
approaches. Furthermore, Figs.~5 and 6 confirm that the
analytical developments accurately characterize the behavior
of the system.

In particular, Theorem~1 provides a useful upper bound on
semantic interference, Theorem~2 accurately predicts feasible
token occupancy under SSINR constraints, and Theorem~3
achieves near-optimal power allocation with significantly reduced complexity. Collectively, these results validate the effectiveness of jointly modeling adaptive token selection, semantic interference,
channel awareness, and power allocation in cross-modal
semantic communication systems.

An important observation from Figs.~3--6 is that the gains
of ATS-ToDMA become increasingly significant as network
density grows. This behavior arises because semantic
interference increases nonlinearly with token occupancy,
making interference-aware scheduling more beneficial in
crowded semantic communication environments. The results suggest that semantic interference should be treated as a first-class resource management variable, analogous to signal interference in conventional wireless systems.
Furthermore, the close agreement between analytical
predictions and simulation results confirms that the proposed
theoretical framework provides reliable design guidelines for
cross-modal semantic communication systems.

\subsection{Limitations}
The current study adopts a model-driven representation of semantic similarity and semantic interference. Although the proposed framework captures the fundamental interactions among semantic tokens, future investigations using real foundation-model embeddings and large-scale multimodal datasets are required to further validate the proposed SSINR formulation in practical semantic communication systems.

\section{Conclusions and Future Work}
This paper proposed ATS-ToDMA, a novel Adaptive Token Selection and Token-Domain Multiple Access framework for cross-modal semantic communication systems. By treating semantic tokens as the fundamental resource units, the proposed framework jointly performs semantic token selection, semantic-interference-aware scheduling, and semantic-aware power allocation. A Semantic Signal-to-Interference-plus-Noise Ratio (SSINR) metric was introduced to characterize the combined impact of channel impairments and semantic interference arising from token similarity. 
Theoretical analysis established: (i) a semantic interference upper bound, (ii) an occupancy feasibility bound, and (iii) a low-complexity semantic-aware power allocation strategy.

Simulation results demonstrated that ATS-ToDMA consistently outperforms conventional OMA, Semantic NOMA, Random-TS, and Greedy ATS strategies in terms of semantic throughput and semantic decoding accuracy. Furthermore, the close agreement between theoretical predictions and simulation results validated the effectiveness of the proposed analytical framework and confirmed its usefulness as a practical design tool for semantic communication systems.

Future work will focus on implementing ATS-ToDMA using real semantic encoders and large-scale multimodal datasets, including transformer-based language, vision, and speech models. Another promising direction is the integration of ATS-ToDMA with foundation-model-assisted semantic communication architectures and edge intelligence platforms. In addition, extending the framework to dynamic network environments with user mobility, semantic traffic heterogeneity, and distributed semantic scheduling remains an important research challenge for future 6G semantic communication networks.


\appendices
\section{Proof of Theorem~1}\label{Apdx:Thm1}
Since $\xi_{ij} \leq \delta$, we have $\xi_{ij}^2 \leq \delta^2$, and by assumption $\alpha_{ij} \leq \alpha_{\max}$. Substituting into the interference expression yields
\begin{equation}
I_{ij} = \alpha_{ij} \rho_{ij} \xi_{ij}^2 \mathbbm{1}_{ij}
\leq \alpha_{\max} \rho_{ij} \delta^2.
\end{equation}

Summing over all unordered token pairs, we obtain
\begin{equation}
I_{\text{total}} \leq \alpha_{\max} \delta^2 \sum_{j \neq i}^M \sum_{i =1}^M \rho_{ij}.
\end{equation}

Under equal power allocation, i.e., $P_i = P_{ref} = P$, this simplifies to
\begin{equation}
I_{\text{total}} \leq \alpha_{\max} P \delta^2 M(M-1),
\end{equation}
which completes the proof.

\section{Proof of Theorem~2}\label{Apdx:Thm2}
Since $\xi_{ij}^2 \leq \delta^2$ and $\alpha_{ij} \leq \alpha_{\max}$, we have
\begin{equation}
I_i = \sum_{j \neq i}^M \alpha_{ij} P \xi_{ij}^2 
\leq \alpha_{\max} \delta^2 P (M-1) .
\end{equation}

Next, using SSINR$_i$ expression from~\eqref{eq:SSINR_i} and imposing $\text{SSINR}_i \geq \Gamma$, we obtain
\begin{equation}
\frac{P \|\mathbf{g}_i\|^2}{P (M-1) \alpha_{\max} \delta^2 d + N_0} \ge \text{SSINR}_i \geq \Gamma,
\end{equation}
which leads to
\begin{equation}
M \leq 1 + \frac{P \|\mathbf{g}_i\|^2 - \Gamma N_0}{\Gamma P \alpha_{\max} \delta^2 d}.
\end{equation}
This result provides a practical design guideline for limiting the number of tokens per slot in terms of a closed-form bound:
\begin{equation}
M_{\max} = 1 + \frac{P \|\mathbf{g}_i\|^2 - \Gamma N_0}{\Gamma P \alpha_{\max} \delta^2 d}
\end{equation}

\section{Proof of Theorem~3}\label{Apdx:Thm3}
Consider the vector of transmit powers
\begin{equation}
\mathbf{P} = [P_1,\dots,P_M]^\top.
\end{equation}

Define the interference coupling matrix $\mathbf{F}$ as
\begin{equation}
F_{ij} =
\begin{cases}
\dfrac{\Gamma w_{ij}}{\|\mathbf{g}_i\|^2}, & j \neq i, \\
0, & j = i,
\end{cases}
\end{equation}
and define the noise vector
\begin{equation}
u_i = \dfrac{\Gamma N_0}{\|\mathbf{g}_i\|^2}.
\end{equation}

The SSINR feasibility constraints can then be compactly written as
\begin{equation}
\mathbf{P} \geq \mathbf{F}\mathbf{P} + \mathbf{u}.
\end{equation}
Rearranging terms gives
\begin{equation}
(\mathbf{I}-\mathbf{F})\mathbf{P} \geq \mathbf{u}.
\end{equation}
Provided that the spectral radius satisfies $r(\mathbf{F}) < 1,$
the matrix $(\mathbf{I}-\mathbf{F})$ is invertible, and the minimal feasible solution is
\begin{equation}
\mathbf{P}^\star
=
(\mathbf{I}-\mathbf{F})^{-1}\mathbf{u}.
\end{equation}

To avoid the computational complexity of matrix inversion, we employ the first-order Neumann approximation
\begin{equation}
(\mathbf{I}-\mathbf{F})^{-1}
=
\sum_{k=0}^{\infty}\mathbf{F}^k
\approx
\mathbf{I}+\mathbf{F},
\end{equation}
which is accurate when the interference coupling is sufficiently weak.
Substituting this approximation yields
\begin{equation}
\mathbf{P}^\star
\approx
(\mathbf{I}+\mathbf{F})\mathbf{u}
=
\mathbf{u}+\mathbf{F}\mathbf{u}.
\end{equation}
Taking the $i$-th component gives
\begin{equation}
P_i
\approx
u_i
+
\sum_{j\neq i}F_{ij}u_j.
\end{equation}
Finally, substituting the definitions of $F_{ij}$ and $u_j$ produces
\begin{equation}
P_i \approx 
\frac{\Gamma N_0}{\|\mathbf{g}_i\|^2}
\left(
1 + \sum_{j \neq i}
\frac{\Gamma\, \alpha_{ij}\, \xi_{ij}^2 \|\mathbf{g}_j\|^2}{\|\mathbf{g}_i\|^2}
\right),
\end{equation}
which completes the proof.

\bibliographystyle{IEEEtran}
\balance
\bibliography{references}
\end{document}